\DeclareMathAlphabet{\mathcal}{OMS}{cmsy}{m}{n}
\DeclareSymbolFont{largesymbols}{OMX}{cmex}{m}{n}
\newenvironment{termtable}[1][2cm]{%
	\def\term##1##2{\item[$##1$] ##2}%
	\itemize[left=0pt .. #1, itemindent=0pt,
	align=parleft, nosep]
}{%
	\enditemize
}
\newcommand{\RNum}[1]{\uppercase\expandafter{\romannumeral #1\relax}}
\def\baa{\begin{align}}
\def\eaa{\end{align}}
\newcommand{\bsq}{\begin{subequations}}
	\newcommand{\esq}{\end{subequations}}
\newcommand{\beq}{\begin{equation}}
\newcommand{\eeq}{\end{equation}}
\newcommand{\bq}{\begin{eqnarray}}
\newcommand{\eq}{\end{eqnarray}}
\newcommand{\bqn}{\begin{eqnarray*}}
	\newcommand{\eqn}{\end{eqnarray*}}
\newcommand{\bee}{\begin{enumerate}}
	\newcommand{\eee}{\end{enumerate}}
\newcommand{\bi}{\begin{itemize}}
	\newcommand{\ei}{\end{itemize}}
\newcommand{\wang}[1]{\ifthenelse{\boolean{showcomments}}
	{ \textcolor[rgb]{1,0,1}{(ZW:  #1)}}{}}
\newcommand{\fliu}[1]{\ifthenelse{\boolean{showcomments}}
	{ \textcolor{red}{(FL:  #1)}}{}}
\newcommand{\zhang}[1]{\ifthenelse{\boolean{showcomments}}
	{ \textcolor{blue}{(YFZ:  #1)}}{}}
\theoremstyle{definition}
\newtheorem{theorem}{Theorem}
\newtheorem{lemma}[theorem]{Lemma}
\theoremstyle{definition}
\begin{document}

\title{Robust Scheduling of Virtual Power Plant under Exogenous and Endogenous Uncertainties}
% 
% Day-Ahead Scheduling of Virtual Power Plant in DSO’s Energy-Reserve Pool: under Exogenous and Endogenous Uncertainties
\author{Yunfan ~Zhang,
        ~Feng ~Liu,~\IEEEmembership{Senior Member,~IEEE,}
        ~Zhaojian ~Wang,~\IEEEmembership{Member,~IEEE,}
        ~Yifan ~Su,
        ~Weisheng~Wang,
        ~and~Shuanglei~Feng
        % <-this % stops a space
%\thanks{M. Shell was with the Department of Electrical and Computer Engineering, Georgia Institute of Technology, Atlanta, GA, 30332 USA e-mail: (see http://www.michaelshell.org/contact.html).}% <-this % stops a space
%\thanks{J. Doe and J. Doe are with Anonymous University.}% <-this % stops a space 
%\thanks{Manuscript received April 19, 2005; revised August 26, 2015.}
}
\markboth{Journal of \LaTeX\ Class Files,~Vol.~xx, No.~xx, August~xxxx}%
{Shell \MakeLowercase{\textit{et al.}}: Bare Demo of IEEEtran.cls for IEEE Journals}

\maketitle

\begin{abstract}
Virtual power plant (VPP) provides a flexible solution to distributed energy resources integration by  aggregating renewable generation units, conventional power plants, energy storages, and flexible demands. This paper proposes a novel model for determining the optimal offering strategy in the day-ahead energy-reserve market and the optimal self-scheduling plan. It considers exogenous uncertainties (or called decision-independent uncertainties, DIUs)  associated with market clearing prices and available wind power generation, as well as the endogenous uncertainties (or called decision-dependent uncertainties, DDUs)  pertaining to real-time reserve deployment requests. A tractable solution method based on strong duality theory, McCormick relaxation, and the Benders' decomposition to solve the proposed stochastic adaptive robust optimization with DDUs formulation is developed. Simulation results demonstrate the applicability of the proposed approach.
\end{abstract}

% Note that keywords are not normally used for peerreview papers.
\begin{IEEEkeywords}
Adaptive robust optimization, decision dependent uncertainty, endogenous uncertainty, virtual power plant
\end{IEEEkeywords}

\IEEEpeerreviewmaketitle
\section*{Notation}
In this paper, $\mathbb{R}^n$($\mathbb{R}^{m\times n}$) depicts the $n$-dimensional Euclidean space. $[n]:=\left\{1,,...,n\right\}$ denotes the set of integers from 1 to $n$. For a column vector ${x}\in\mathbb{R}^n$ (matrix $A\in\mathbb{R}^{m\times n}$), ${x}^{\mathsf{T}}$ ($A^{\mathsf{T}}$) denotes its transpose. We use ${1}$ and ${0}$ to denote vector of ones and zeros, respectively. For ${x},{ y}\in \mathbb{R}^n$, we denote the inner product ${ x}^{\mathsf{T}}{y}=\sum_{i=1}^n x_iy_i$ where $x_i,y_i$ stands for the $i$-th entry of ${x}$ and ${ y}$, respectively. We use flourish capital $\mathcal{W}:{X}\rightrightarrows{Y}$ to denote a set-valued map if $\mathcal{W}({x})$ is a nonempty subset of ${Y}$ for all ${ x}\in{X}$.
%Additional symbols with subscript $k$ are used to represent new variables corresponding to iteration k. 

\subsection*{Sets and Index}
\begin{termtable}[2.5cm]
	\term{T}{Sets of time periods indexes $t$.}
	\term{I_G,I_W,I_D,I_{ES}}{Sets of conventional power plants (CPP), wind generation units, flexible demand units, and energy storage units indexes $i$.}
	\term{I_{S}}{Sets of market price scenarios indexes $s$.}
\end{termtable}
\subsection*{Parameters}
\begin{termtable}[2.5cm]
	\term{C_i^{0},C_i^{1},C_i^{SU},C_i^{SD}}{Fixed, variable, start-up and shut-down cost coefficient of CPP $i$.}
	\term{\overline{P}^{R+},\overline{P}^{R-}}{Maximal up-/down- reserve capacity.}
	\term{\overline{E}^{R+},\overline{E}^{R-}}{Maximal up-/down- reserve energy that can be traded in the reserve markets.}
	\term{EXCH_{\rm max}}{Transaction limit between the VPP and distribution energy market.}
	\term{T_i^{\rm on},T_i^{\rm off}}{Minimal on/off time of CPP $i$.}
	\term{R_i^{+},R_i^{-},R_i^{SU},R_i^{SD}}{Up-/down-/ start-up/shut-down ramping limits of the CPP $i$.}
	\term{\overline{P}_i^{Gen},\underline{P}_i^{Gen}}{Power limit of the CPP $i$.}
	\term{\overline{P}_{i,t}^D,\underline{P}_{i,t}^D,\underline{D}_i^D}{Maximal/minimal power consumption and minimal daily energy consumption of the flexible demand $i$.}
	\term{r_i^{D+},r_i^{D-}}{Load pick-up/drop ramping limits for flexible demand.}
	\term{\overline{P}_i^{ch},\overline{P}_i^{dc}}{Charging and discharging power capacities of the storage unit $i$.}
	\term{\eta_i^{ch},\eta_i^{dc}}{Charging and discharging efficiency rates of the storage unit $i$.}
	\term{\overline{SOC}_i,\underline{SOC}_i}{Lower and upper bounds for the stored energy of the storage unit $i$.}
	\term{\overline{SOC}_i,\underline{SOC}_i}{Available state of charge of storage unit $i$.}
	\term{\overline{P}_{i,t}^{AW},\underline{P}_{i,t}^{AW},P_{i,t}^{AW,av}}{Maximal/minimal/average value of available wind power.}
	\term{P_{i,t}^{AW.h}}{Fluctuation level of available wind generation.}
	\term{\mu_t^{E},\mu_{s,t}^{E}}{Energy market price at time period $t$, as a deterministic coefficient and the value under scenario $s$, respectively.}
	\term{\mu^{RE+},\mu^{RE-}}{Up-/down- reserve energy prices as a deterministic coefficient.}
	\term{\mu_s^{RE+},\mu_s^{RE-}}{Up-/down- reserve energy prices under scenario $s$.}
	\term{\mu_{t}^{RC+},\mu_{t}^{RC-}}{Up-/down- reserve capacity price as a deterministic coefficient.}
	\term{\mu_{s,t}^{RC+},\mu_{s,t}^{RC-}}{Up-/down- reserve capacity price under scenario $s$.}
	\term{\omega_s}{Occurrence probability of scenario $s$.}
\end{termtable}
\subsection*{Variable}
\begin{termtable}[2.5cm]
	\term{u_{i,t},v_{i,t}^{SU},v_{i,t}^{SD}}{Binary variables representing the state/start-up action/shut-down action of CPP $i$.}
	\term{p_{i,t}^{Gen}}{Power generation of CPP $i$.}
	\term{p_{i,t}^D}{Power consumption of flexible demand $i$.}
	\term{p_{i,t}^{ch},p_{i,t}^{dc}}{Charge/discharge power of storage unit $i$.}
	\term{SOC_{i,t}}{State of charge (SoC) of storage unit $i$.}
	\term{p_{i,t}^{W}}{Production of the wind power unit $i$.}
	\term{p_{i,t}^{AW}}{Available wind generation of unit $i$.}
	\term{p_t^{EXCH}}{Energy transaction between the VPP and distribution market at time period $t$.}
	\term{SIG_t^{+},SIG_t^{-}}{Up-/down- regulating signals to the VPP.}
\end{termtable}
\section{Introduction}
%The past few decades have witnessed the proliferation of renewable energy, such as wind generation and photovoltaic (PV), boosted by increasing environmental concerns and energy crisis. However, the intrinsic stochastic, variability and intermittency nature of renewable generation resources imposes great challenges to power system operation\cite{2012Efficient}. 
In recent years, virtual power plant (VPP) technique is developed to promote the effective utilization of renewable resources and achieve environmental and economical superiority\cite{2017AComprehensiveReview}. It combines renewable units with conventional generation units, storage facilities and controllable load demands, etc. Such combination enables distributed energy resources with complementary advantages participating in power system operation and energy-reserve market as an integrated entity. During this process,  uncertainties, distinguished as exogenous and endogenous, are inevitably involved. 
The former, which is also known as decision-independent uncertainty (DIU), is independent of decisions. The latter, which is also known as decision-dependent uncertainty (DDU), can be affected by decision variables. This paper addresses the robust scheduling of a VPP participating day-ahead (DA) energy and reserve market, considering both DIUs and DDUs. 
  
Several closely relevant works are \cite{FuContributing,RiskConstrained,Shabanzadeh,Rahimiyan,PLi,Fourlevel,AStochastic,StochasticAdaptive,DayAhead}, where various optimization techniques are applied to hedge against the risk raised by  uncertainties. In \cite{FuContributing,RiskConstrained}, chance-constrained stochastic programs are utilized to achieve risk-aversion of VPP. In \cite{Shabanzadeh,Rahimiyan,PLi,Fourlevel}, robust optimization (RO) approaches are implemented to maximize the economic profit of VPP under the worst-case realization of the uncertainty in a given set. Reference \cite{Shabanzadeh} applies an RO-based model to the self-scheduling of VPP in the volatile day-ahead market environment whereas the uncertainties pertaining to renewable generations are left out. In \cite{Rahimiyan,Fourlevel}, bidding strategies of VPP in both DA and real-time (RT) markets considering uncertainties of DA market prices, RT market prices and wind production are presented. To hedge against multi-stage uncertainties, a standard two-stage robust model is applied in \cite{Rahimiyan}. Moreover, a four-level robust model is formulated in \cite{Fourlevel} with a tractable algorithm based on strong duality theorem and column-and-constraint generation (C\&CG) algorithm. In \cite{PLi} communication failures and cyber-attacks on the distributed generators in a VPP are considered and a robust economic dispatch of the VPP is accordingly proposed. In \cite{AStochastic,StochasticAdaptive,DayAhead}, the scenario-based stochastic program and the adaptive robust optimization (ARO) are combined, leading to a  stochastic ARO.
%Note that the uncertainties in market clearing prices, renewable generation and reserve deployment requests are all considered as decision independent in these works.
 
In spite of the relevance of the aforementioned literature, the dependency of uncertainties on decisions is disregarded. Specifically, the volatile market prices are regarded as exogenously uncertain as the VPP is assumed to be a price taker in the market. The uncertainties of renewable generations are also considered exogenous since they are determined by uncontrollable natural factors. As for the uncertain reserve deployment requests to VPP, equivalent binary-variable-based representation of the uncertainty set with a given budget parameter indicates that it is a DIU set.
However, when taking into account the reserve energy provided by the VPP, the polyhedral uncertainty set pertaining to reserve deployment requests becomes endogenous, i.e., dependent on VPP's offering in the reserve market, and cannot be reduced to its extreme-based exogenous equivalent. To the best of the authors’ knowledge, no research work has concurrently modeled exogenous uncertainties and endogenous uncertainties for self-scheduling of a VPP in the RO framework, which is specific to this paper.

% chance constrained
%\cite{FuContributing}
%\cite{RiskConstrained}
% robust
%\cite{Shabanzadeh}
%\cite{Rahimiyan}
%\cite{PLi} 
%\cite{Fourlevel}
% stochastic adaptive
%\cite{AStochastic}
%\cite{StochasticAdaptive}
%\cite{DayAhead}

RO under decision-dependent uncertainties (RO-DDU) recently has drawn increasing attention in  the optimization community. Literature regards RO-DDU as two categories: static RO-DDU \cite{Lappas2018,Nohadani2018,Poss2013,vujanic2016,zhang2017} and adaptive RO-DDU (ARO-DDU)\cite{AUnified,Lappas2016,su2020}. In \cite{Lappas2018,Nohadani2018,Poss2013,vujanic2016,zhang2017}, the linear decision-dependency of  polyhedral uncertainty sets on decision variables is considered, rendering a static RO-DDU model. Then, the robust counterpart, which is a mixed integer linear program (MILP), is derived by applying the strong duality theory and  McCormick Envelopes convex relaxation. In \cite{AUnified,Lappas2016,su2020}, ARO-DDU models that concurrently incorporate wait-and-see decisions and endogenous uncertainties are studied. Due to the computational intractability raised by the complex coupling relationship between uncertainties and decisions in two stages, the current works make considerable simplifications on the model. Reference \cite{AUnified,Lappas2016} assume affine decision rules for the wait-and-see decisions, converting the two-stage RO problem into a static RO problem. To address a two-stage ARO-DDU problem without any assumption on affine policies, the extensively-used C\&CG algorithm \cite{BZeng} may fail when the uncertainty set is decision-dependent. In this regard, reference \cite{su2020} focuses on a high-dimensional rectangle DDU set and accordingly proposes an improved C\&CG algorithm with a worst-case-scenario mapping technique.  However, to the best of our knowledge, the solution method for ARO-DDU with general linear dependency has not been addressed.

% static robust ddu
%\cite{Lappas2018}
%\cite{Nohadani2018}
%\cite{Poss2013}
%\cite{vujanic2016}
%\cite{zhang2017}

% adaptive robust ddu
%\cite{AUnified}
%\cite{Lappas2016}

% benders
%\cite{Benders}
% CCG
%\cite{BZeng}

Regarding the aforementioned issues, this paper considers the  robust offering and scheduling strategies of VPP participating in the DA energy-reserve market, where both exogenous and endogenous uncertainties are involved. Specifically, the uncertainties of market prices and renewable generations are exogenous (or called decision-independent), while the uncertainties of reserve deployment requests are endogenous (or called decision-dependent). The main contributions are twofold:
\begin{itemize}
	\item [1)]\textbf{Modeling:} A novel stochastic ARO model incorporating both exogenous and endogenous uncertainties is provided for the robust scheduling of VPP trading in the DA energy-reserve market. Compared with existing works \cite{AStochastic,StochasticAdaptive,DayAhead}, we characterize the dependency of uncertain reserve deployment requests on VPP's decisions in the DA reserve market.
	
	\item [2)]\textbf{Algorithm:} A novel Benders' decomposition based algorithm is proposed to solve the stochastic ARO-DDU problem with general linear decision dependency. The proposed algorithm is guaranteed to converge to the optimum within finite rounds of iterations. To the best of our knowledge, the computational intractability of non-reduced ARO-DDU with general linear decision dependency has not been addressed in the existing literature.
	
\end{itemize}

The rest of this paper is organized as follows. Section II presents the VPP DA robust scheduling formulation with the characterization of both exogenous and endogenous uncertainties. Section III derives the robust counterpart and a solution methodology based on Benders' decomposition. A case study is presented in Section IV. Finally, Section V concludes the paper.

\section{Model Description}
\subsection{DA scheduling of VPP}
Revenue of VPP in the DA energy market comprises the cost of purchasing energy or the income of selling energy and is calculated as follows:
\begin{equation}
R^{NRG}=\sum\nolimits_{t\in{T}}{\mu}_t^{E}p_{t}^{E}
\end{equation}
Reserve market revenue consists of the income of providing reserve service that includes two parts: reserve capacity and reserve energy.
$R^{RSV}=$
\begin{align}
E^{R+}\mu^{RE+}+E^{R-}\mu^{RE-}+\sum\nolimits_{t\in{T}} \left(p^{R+}_t\mu^{RC+}_t
+p^{R-}_t\mu^{RC-}_t\right)
\end{align}

Regarding the generation cost of VPP, operation cost of wind generation units is assumed to be zero, leaving the inherent cost to be the operation cost of CPPs. The operation cost of CPP is computed as
\begin{align}
\label{cost:cpp}
C^{Gen}=\sum\nolimits_{t\in{T},i\in I_G}\left(C_i^{0}u_{i,t}
+C_i^{SU}v_{i,t}^{SU}+C_i^{SD}v_{i,t}^{SD}+C_i^{1}p_{i,t}^{Gen}\right)
\end{align}
which comprises fixed cost, start-up and shut down cost, and the variable generation cost.

The VPP determines the following things as the DA decisions: (i) The power sold to/bought from the day-ahead energy market; (ii) The reserve capacity at each time slots, as well as the maximum reserve energy that can be provided in the day-ahead reserve market; and (iii) The unit commitment of CPP.

\subsection{Uncertainty Characterization}
In this paper, three kinds of uncertainties are taken into consideration as follows.

\subsubsection{Market Clearing Price}
The market clearing prices are exogenously uncertain since the VPP is assumed to be a price taker in DA energy-reserve market. Price uncertainties appear only in the objective function, affecting the optimality of decisions but not the feasibility of the VPP system. Thus it is suitable to model price uncertainty into a scenario-based stochastic programming that aims to minimize the expected net cost of VPP over a set of representative scenarios:  
\begin{subequations}
\begin{align}
\mathbb{E}C^{net} &= \mathbb{E}\left(C^{Gen} - R^{NRG} -R^{RSV}\right)\\
\notag
&= C^{Gen} - \sum\nolimits_{s\in I_S} \omega_s\left(
\mu_s^{RE+}E^{R+}+\mu_s^{RE-}E^{R-}
\right)
\\
&-\sum\nolimits_{s\in I_S,t\in T}\omega_s\left(
\mu_{s,t}^Ep_t^E+\mu^{RC+}_{s,t}p^{R+}_t
+\mu^{RC-}_{s,t}p^{R-}_t
\right)
\end{align}
\end{subequations}

\subsubsection{Available Wind Generation}
Available wind generation $P^{AW}$ is exogenously uncertain since it is determined by nature condition. It appears in the operating constraints of VPP, imposing a significant effect on not only the optimality but also the feasibility of the solution. Thus wind uncertainty is characterized by the following ambiguity set.
\begin{subequations}
	\label{uncertain:W}
\begin{align}
{W}=\left\{{p}_i^{AW}\in\mathbb{R}^T:
\underline{P}_{i,t}^{AW}\le{p}_{i,t}^{AW}\le\overline{P}_{i,t}^{AW},\forall i\in I_W,t\in T\right.\\
\sum\nolimits_{t\in T}\vert{p}_{i,t}^{AW} - P_{i,t}^{AW,av}\vert/P_{i,t}^{AW,h}\le \Gamma_i^{T},\forall i\in I_W\\
\left.\sum\nolimits_{i\in I_W}\vert{p}_{i,t}^{AW} - P_{i,t}^{AW,av}\vert/{P_{i,t}^{AW,h}}\le \Gamma_t^{S},\forall t\in T\right\}
\end{align}
\end{subequations}
where $P_{i,t}^{AW,av}=\frac{\overline{P}_{i,t}^{AW}+\underline{P}_{i,t}^{AW}}{2}$ and $P_{i,t}^{AW,h}=\frac{\overline{P}_{i,t}^{AW}-\underline{P}_{i,t}^{AW}}{2}$, $\forall t\in T,i\in I_W$. It is assumed that the available wind generation fluctuates with the interval between $\underline{P}_{i,t}^{AW}$ and $\overline{P}_{i,t}^{AW}$, under a certain confidence level. $P^{AW,av}$ is the average level for available wind power generation and is calculated as the mean value of the corresponding upper and lower confidence bounds $\underline{P}_{i,t}^{AW}$ and $\overline{P}_{i,t}^{AW}$. $P^{AW,h}$ denotes half of the interval width. To alleviate conservativeness of the model, space robustness budget $\Gamma^{S}$ and time robustness budget $\Gamma^{T}$ is added to avoid that $p^{AW}$ always achieve boundary values.

\subsubsection{Reserve Deployment Request}
Considering the uncertainty in reserve deployment requests $SIG^{+}$ and $SIG^{-}$, energy transaction between the VPP and the distribution energy market $p^{EXCH}$ is endogenously uncertain since it depends upon VPP's decision in DA energy-reserve market. We model the uncertainty of $p^{EXCH}$ by exploring its decision-dependent uncertainty set:
\begin{subequations}
	\label{uncertain:P2}
	\begin{align}
	\label{uncertain:P2:1}
	P(p^E,p^{R+},p^{R-},E^{R+},E^{R-})=\left\{p^{EXCH}\in\mathbb{R}^{\vert T\vert}:\right.\\
	\label{uncertain:P2:2}
	p^{EXCH}_t=p^E_t+ SIG_t^{+}-SIG_t^{-},\forall t\in T\\
	\label{uncertain:P2:3}
	SIG^{+}\in\mathbb{R}^{\vert T\vert},0\le SIG_t^{+}\le p_t^{R+},\forall t\in T\\
	\label{uncertain:P2:4}
	SIG^{-}\in\mathbb{R}^{\vert T\vert},0\le SIG_t^{-}\le p_t^{R-},\forall t\in T\\
	\label{uncertain:P2:5}
	\left.\sum\nolimits_{t\in T} SIG_t^{+} \le E^{R+},\sum\nolimits_{t\in T} SIG_t^{-} \le E^{R-}\right\}
	\end{align}
\end{subequations}
Constraint \eqref{uncertain:P2:5} imposes limits on the total reserve energy that to be deployed. $p^{R+}$, $p^{R-}$, $E^{R+}$, and $E^{R-}$ together control the conservativeness of the ambiguity set associated with the requests for reserve deployment $SIG^{+},SIG^{-}$. Note that in \eqref{uncertain:P2} the complementarity constraint to avoid the situation that up- and down- regulation signals are given simultaneously is omitted. This is because the ambiguity set of $p^{EXCH}$ remains the same with the relaxation on the complementarity constraint.

%\begin{remark}[Decision Dependent Uncertainty Set]
	%The uncertainty set in \eqref{uncertain:P2} is decision-dependent, since the location and shape of the polytope $P$ changes with decisions $p^E,p^{R+},p^{R-},E^{R+},E^{R-}$. Moreover, since $E^{R+}$ and $E^{R-}$ are continuous values, the binary-variable-based representation of uncertainty set widely used in \cite{} is no more applicable.
	
	%the worst-case of $p^{EXCH}$ can not be equivalently characterized by binary variables, imposing difficulty on finding worst-case uncertainty.
%\end{remark}

\subsection{Formulation}
The proposed adaptive robust optimization model aims at minimizing the expected cost over the representative scenarios of market clearing price. Moreover, feasibility of real-time operation of VPP is warranted,  even under the worst-case uncertainties of available wind generation and reserve deployment requests.

\begin{subequations}
\label{problem:RODDU}
\begin{align}
\label{problem:RODDU:obj1}
&\text{minimize}\ \mathbb{E}C^{net},\text{ subject to}\\
\label{constraint1:1}
&\left\{u,v^{SU},v^{SD},p^{E},p^{R+},p^{R-},E^{R+},E^{R-}\right\}\in X\cap X_R\\
&\left\{p^{Gen,0},p^{D,0},p^{ch,0},p^{dc,0},SOC^0,p^{W,0}\right\}
\in Y^0(u,p^{E})
\end{align}
\end{subequations}
where
\begin{subequations}
\label{X}
\begin{align}
\label{constraint1:2}
{ X}:=\left\{u,v^{SU},v^{SD},p^{E},p^{R+},p^{R-},E^{R+},E^{R-}:\right.\\
\label{constraint1:3}
0\le p_t^{R+}\le \overline{P}^{R+},\forall t\in{T}\\
\label{constraint1:4}
0\le p_t^{R-}\le \overline{P}^{R-},\forall t\in{T}\\
\label{constraint1:5}
0\le E^{R+}\le \min\left\{\overline{E}^{R+},\sum\nolimits_{t\in T}p_t^{R+}\right\}\\
\label{constraint1:6}
0\le E^{R-}\le \min\left\{\overline{E}^{R-},\sum\nolimits_{t\in T}p_t^{R-}\right\}\\
\label{constraint1:7}
-EXCH_{\text{max}}\le p_t^E\le EXCH_{\text{max}},\forall t\in{T}\\
% CPP约束
\label{constraint1:8}
u_{i,t},v_{i,t}^{SU},v_{i,t}^{SD}\in\left\{0,1\right\},\forall t\in{T},\forall i\in I_G\\
\label{constraint1:9}
v_{i,t}^{SU}+v_{i,t}^{SD}\le 1,\forall t\in{T},\forall i\in I_G\\
\label{constraint1:10}
u_{i,t+1}=u_{i,t}+v_{i,t}^{SU}-v_{i,t}^{SD},\forall t\in{T},\forall i\in I_G\\
\label{constraint1:11}
-u_{i,t-1}+u_{i,t}\le u_{i,\tau},\forall t\le \tau\le T_i^{\text{on}}+t-1,i\in I_G\\
\label{constraint1:12}
\left.u_{i,t-1}-u_{i,t}+u_{i,\tau}\le 1,\forall t\le \tau\le T_i^{\text{off}}-1,i\in I_G\right\}
\end{align}
\end{subequations}
The feasible region of the wait-and-see decisions is formulated in \eqref{constraint3} where $p^{AW}$ and $p^{EXCH}$ are uncertainties.
\begin{subequations}
	\label{constraint3}
	\begin{align}
	\label{constraint3:2}
	{Y}(u,p^{AW},p^{EXCH}):=\left\{p^{Gen},p^{D},p^{ch},p^{dc},SOC,p^W:\right.\\
	% CCP约束
	\label{constraint3:4}
	u_{i,t}\underline{P}_i^{Gen}\le p_{i,t}^{Gen}\le u_{i,t}\overline{P}_i^{Gen},\forall t\in{T},i\in I_G\\
	\label{constraint3:5}
	p_{i,t+1}^{Gen}-p_{i,t}^{Gen}\le u_{i,t}R_i^{+}+(1-u_{i,t})R_i^{SU},\forall t\in{T},i\in I_G\\
	\label{constraint3:6}
	p_{i,t-1}^{Gen}-p_{i,t}^{Gen}\le u_{i,t}R_i^{-}+(1-u_{i,t})R_i^{SD},\forall t\in{T},i\in I_G\\
	% 负荷约束
	\label{constraint3:7}
	\underline{P}^D_{i,t}\le p_{i,t}^D\le \overline{P}^D_{i,t},\forall t\in{T},\forall i\in I_D\\
	\label{constraint3:75}
	-r_i^{D-}\le p_{i,t+1}^D-p_{i,t}^D\le r_i^{D+},\forall t\in T,\forall i\in I_D\\
	\label{constraint3:8}
	\sum\nolimits_{t\in{T}}p_{i,t}^D\ge \underline{D}_i^D,\forall i\in I_D\\
	% 储能约束
	\label{constraint3:9}
	0\le p^{ch}_{i,t}\le \overline{P}_{i}^{ch},\forall t\in{T},\forall i\in I_{ES}\\
	\label{constraint3:10}
	0\le p^{dc}_{i,t}\le \overline{P}_{i}^{dc},\forall t\in{T},\forall i\in I_{ES}\\
	\label{constraint3:11}
	SOC_{i,t}=SOC_{i,t-1}+\eta_i^{ch}p_{i,t}^{ch}-\frac{1}{\eta_i^{dc}}p_{i,t}^{dc},\forall t\in{T},i\in I_{ES}\\
	\label{constraint3:12}
	\underline{SOC}_i\le SOC_{i,t}\le \overline{SOC}_i,\forall i\in I_{ES}\\
	% 风电约束
	\label{constraint3:13}
	0\le p_{i,t}^{W}\le {p}_{i,t}^{AW},\forall t\in{T},\forall i\in I_{W}\\
	\notag
	\sum\nolimits_{i\in I_G}p_{i,t}^{Gen}+\sum\nolimits_{i\in I_W}p_{it}^{W}+\sum\nolimits_{i\in I_{ES}}p_{i,t}^{dc}=p_t^{EXCH}+\\
	\label{constraint3:15}
	\left.\sum\nolimits_{i\in I_D}p_{i,t}^D+\sum\nolimits_{i\in I_{ES}}p_{i,t}^{ch},\forall t\in {T}
	\right\}.
	\end{align} 
\end{subequations}
Thus the feasible region of the baseline re-dispatch decisions $p^{Gen,0},p^{D,0},p^{ch,0},p^{dc,0},SOC^0,p^{W,0}$ is 
\begin{align}
\notag
&Y^0(u,p^{E}):=Y(u,p^{AW,av},p^{E})
\end{align}
Then the robust feasibility set of $x$ is defined as
\begin{subequations}
\label{X_R}
\begin{align}
{X}_R := \left\{u,p^{E},p^{R+},p^{R-},E^{R+},E^{R-}:\right.\\
\forall p^{AW}\in {W} \text{ defined in \eqref{uncertain:W}},\\
\forall p^{EXCH}\in{P}({ p^E,p^{R+},p^{R-},E^{R+},E^{R-}})\text{ defined in \eqref{uncertain:P2}},\\
\left.{Y}(u,p^{AW},P^{EXCH})\neq\emptyset\right\}
\end{align}
\end{subequations}

The feasible region of VPP's DA decisions ($u$, $v^{SU}$, $v^{SD}$, $p^{E}$, $p^{R+}$, $p^{R-}$, $E^{R+}$, $E^{R-}$) is denoted by ${X}$, with the specific form of \eqref{constraint1:2}-\eqref{constraint1:12}, including
constraints of the DA market \eqref{constraint1:2}-\eqref{constraint1:7} which impose limits on the energy and reserve offering of the VPP, as well as constraints of CPP \eqref{constraint1:8}-\eqref{constraint1:12}. The feasible region of VPP's RT decisions ($p^{Gen}$, $p^{D}$, $p^{ch}$, $p^{dc}$, $SOC$, $p^W$), also called wait-and-see decisions, is denoted by $Y$, where constraints of CPP \eqref{constraint3:4}-\eqref{constraint3:6}, flexible demand \eqref{constraint3:7}-\eqref{constraint3:8}, the energy storage unit \eqref{constraint3:9}-\eqref{constraint3:12}, the wind generation unit \eqref{constraint3:13}, and the power balance of VPP \eqref{constraint3:15} are included. When no uncertainties exist, i.e., $p^{AW}=p^{AW,av},p^{EXCH}=p^E$, the feasible region of the baseline re-dispatch decisions $p^{Gen,0},p^{D,0},p^{ch,0},p^{dc,0},SOC^0,p^{W,0}$ is denoted by $Y^0$. The uncertain parameters in the DA scheduling problem are $p^{AW}$ and $p^{EXCH}$. The decision independent uncertainty set ${W}$ for $p^{AW}$ is given in \eqref{uncertain:W} and the decision dependent uncertainty set ${P}(\cdot)$ for $p^{EXCH}$ is given in \eqref{uncertain:P2}. For the wait and see decisions $p^{Gen},p^{D},p^{ch},p^{dc},SOC,p^W$, their feasible space ${Y}(u,p^{AW},p^{EXCH})$ is actually a set-valued map parameterized by the first stage decision ${u}$ and the uncertain variables $p^{AW},p^{EXCH}$. Besides the aforementioned operation constraints of each stage, the first-stage decision $u$, $v^{SU}$, $v^{SD}$, $p^{E}$, $p^{R+}$, $p^{R-}$, $E^{R+}$, $E^{R-}$ has to satisfies robust feasibility, as characterized in \eqref{X_R}. $X_R$ is called robust feasibility region.

Next we give the compact form of two-stage stochastic robust optimization problem \eqref{problem:RODDU}-\eqref{X_R}.
To simplify the formulation, the following terminology is adopted. 
\begin{subequations}
	\begin{align}
	&x:=\left\{u,v^{SU},v^{SD},p^{E},p^{R+},p^{R-},E^{R+},E^{R-}\right\}\\
	&y^0:=\left\{p^{Gen,0},p^{D,0},p^{ch,0},p^{dc,0},SOC^0,p^{W,0}\right\}\\
	&w:=\left\{p^{AW},p^{EXCH}\right\}\\
	&{y}:=\left\{p^{Gen},p^{D},p^{ch},p^{dc},SOC,p^W\right\}
	\end{align}
\end{subequations}
where 
${x}\in\mathbb{R}^{n_{\mathcal{R}}}\times\mathbb{Z}^{n_{\mathcal{Z}}}$, $w\in \mathbb{R}^{n_w}$ and $y,y^0\in \mathbb{R}^{n_y}$. The dimension of ${x}$ is denoted by $n_x=n_{\mathcal{R}}+n_{\mathcal{Z}}$. We denote the cost items in \eqref{problem:RODDU:obj1} by a unified form $f({x},y^0):\mathbb{R}^{n_{\mathcal{R}}+n_y}\times\mathbb{Z}^{n_{\mathcal{Z}}}\rightarrow \mathbb{R}^1$. Then  \eqref{problem:RODDU}-\eqref{X_R} is formulated in a compact form of
\begin{subequations}
	\label{compact}
	\begin{align}
	\label{compact:1}
	&\min\nolimits_{{x},y^0}f({x},y^0)\\
	\label{compact:2}
	&\text{s.t.}\ {x}\in {X}\cap X_R,y^0\in Y^0(x)\\
	\label{compact:3}
	&{X}_R = \left\{{x}|\forall {w}\in \mathcal{W}({ x}),\mathcal{Y}({x},{w})\neq \emptyset\right\}\\
	\label{compact:4}
	&\mathcal{W}({x}) = \left\{{w}\in\mathbb{R}^{n_w}|{G}{ w}\le {g}+{\Delta} {x}\right\}\\
	\label{compact:5}
	&\mathcal{Y}({x},{w})=\left\{{y}\in\mathbb{R}^{n_y}|{A}{x}+B{y}+C{w}\le {b},{y}\ge {0}\right\}
	\end{align}
\end{subequations}
where ${G}\in\mathbb{R}^{r\times n_w},{g}\in\mathbb{R}^{r},{\Delta} \in \mathbb{R}^{r\times n_x},A\in\mathbb{R}^{m\times n_x},B\in\mathbb{R}^{m\times n_y},C\in\mathbb{R}^{m\times n_w}$ and ${ b}\in\mathbb{R}^{m}$ are constants. $\mathcal{W}(x)$ is a unified form of the decision-independent uncertainty set ${W}$ in \eqref{uncertain:W} and the decision dependent uncertainty set ${P}$ in \eqref{uncertain:P2}. 
% 想说明这是一个general model
Note that \eqref{compact:4} models general decision dependence, which encompasses the case of decision-independent uncertainties by setting the corresponding rows of $\Delta$ to zeros. $\mathcal{Y}({x},{w})$ is the compact form of $Y$ in \eqref{constraint3}.

Problem \eqref{compact} is a two-stage adaptive robust optimization problem with decision dependent uncertainties. Regarding the solution methodology to this type of problem, the C\&CG algorithm is no longer applicable, for the reason that the worst-case uncertainty ${w}^*\in \mathcal{W}({x}^1)$ with a given ${x}^1$ may lie outside the uncertainty set when giving another ${x}^2$, i.e., ${w}^*\notin \mathcal{W}({x}^2)$. Then the feasibility cut of the C\&CG algorithm may fail to obtain an optimal solution. Moreover, since the vertices set of polytope $\mathcal{W}({x})$ changes with ${x}$, the C\&CG algorithm no longer guarantees finite iterations to convergence.

\section{Solution Methodology}
%We solve the equivalent problem of TRO-DDU problem \eqref{compact}, which is given in \eqref{equivalent}, using an improved benders decomposition algorithm, which involves the iterative solution of a master problem and a robust feasibility examination problem. The algorithm is proved to converge to the optimum of \eqref{equivalent} within finite iterations.

\subsection{Equivalent Transformation}
Given a first stage decision ${x}$, the robust feasibility of ${x}$, i.e., whether ${x}$ locates within ${X}_R$, can be examined by solving the following relaxed bi-level problem:
\begin{subequations}
	\label{def:1:6}
	\begin{align}
	\label{def:1:6:1}
	&R({x})=\max\nolimits_{{w}\in\mathcal{W}({x})}\min\nolimits_{{y},{s}}{ 1}^{\mathsf{T}}{ s}\\
	\label{def:1:6:2}
	&\text{s.t.}\ A{x}+B{y}+C{w}\le {b}+{s},{y}\ge {0}, {s}\ge {0}
	\end{align}
\end{subequations}
where ${s}\in\mathbb{R}^m$ is the supplementary variable introduced to relax the constraint $A{ x}+B{ y}+C{ w}\le { b}$ in $\mathcal{Y}({ x},{ w})$. If $R({x})\le 0$, ${ x}$ is robust feasible, i.e., ${ x}\in X_R$. Else if $R({ x})> 0$, there exists a realization of the uncertain ${ w}$ lying in the $\mathcal{W}({ x})$ that makes no feasible second-stage decision ${ y}$ is available. Since ${ x}\in X_R$ if and only if $R({ x})\le 0$, we substitute the constraint $x\in X_R$ in \eqref{compact} by $R({ x})\le 0$.

It is useful to write the dual of the inner minimization problem in $R({x})$. Then, $R({x})$ can be equivalently transformed into the following single-level bi-linear maximization problem
\begin{subequations}
	\label{ineq:alg:2}
	\begin{align}
	R({x})=&\max\nolimits_{{w},{ \pi}} {\pi}^{\mathsf{T}}\left({ b}-A{ x}-C{ w}\right)\\
	&{\rm{s.t.}}\ {\pi}\in \Pi,\ { w}\in \mathcal{W}({ x})
	\end{align}
\end{subequations}
where ${\pi}\in\mathbb{R}^m$ is the dual variable on constraint \eqref{def:1:6:2} and $\Pi=\left\{{ \pi}|B^{\mathsf{T}}{ \pi}\le { 0},-{ 1}\le{\pi}\le { 0}\right\}$. Therefore, problem \eqref{compact} can be reformulated into the following non-linear static robust optimization problem with DDU:
\begin{subequations}
	\label{equivalent}
	\begin{align}
	\label{equivalent:1}
	&\min\nolimits_{{x},y^0} f({x},y^0) \\
    \label{equivalent:2}
	&{\rm{s.t.}}\ {x}\in { X},y^0\in Y^0(x)\\
    \label{equivalent:3}
	&{0}\ge{ \pi}^{\mathsf{T}}({ b}-A{ x}-C{w}),\forall \pi\in \Pi,w\in \mathcal{W}({x})
	\end{align}
\end{subequations}
Constraint \eqref{equivalent:3} is decision-dependent static robust constraint. However, due to the bi-linear relationship between variable $\pi$ and variable $w$ in term $-{\pi}^{\mathsf{T}}C{w}$, techniques used to derive a robust counterpart of regular static robust optimization are no more applicable to problem  \eqref{equivalent}. To address the difficulty in solving ARO-DDU problem \eqref{compact} and its equivalent formulation \eqref{equivalent}, next we provide a novel two-level iterative solution algorithm based on Benders decomposition\cite{Benders}.
\subsection{Master Problem (MP)}
The master problem at iteration $k$ is formulated below:
\begin{subequations}
	\label{ineq:alg:5}
	\begin{align}
	\label{ineq:alg:5:1}
	&\min\nolimits_{{x},y^0} f({x},y^0)\\
	\label{ineq:alg:5:2}
	&{\rm{s.t.}}\ {x}\in {X},y^0\in Y^0(x)\\
	\label{fea_cut}
	&0 \ge {{\pi}_j^*}^{\mathsf{T}}\left({ b}-A{ x}-C{ w}\right),\forall { w}\in\mathcal{W}({ x}),j\in [k]
	\end{align}
\end{subequations}
where ${\pi}_1^*,...,{\pi}_{k}^*$ are solutions from the robust feasibility examination problem. If ${\pi}_1^*,...,{\pi}_{k}^*\in \Pi$, then the MP \eqref{ineq:alg:5} is a relaxation to \eqref{equivalent}. We solve MP \eqref{ineq:alg:5} to derive a relaxed optimum of \eqref{equivalent}. Constraints \eqref{fea_cut} are feasibility cuts to MP. They are designed to have the following salient features: (i) The worst-case uncertainty ${w}^*$ is not involved, to accommodate the coupling relation between ${x}$ and ${w}$, which is different from the C\&CG algorithm. (ii) Dual information of robust feasibility examination problem (i.e., ${ \pi}^*$) are included, inspired by the Benders dual decomposition. However, they are designed to be no longer a hyperplane, but a static robust constraint, to comprise a cluster of worst-case uncertainties.

Next, we illustrate how to deal with the robust constraint \eqref{fea_cut} by substituting it with its robust counterpart. For any given $j$ in $[k]$, constraint  \eqref{fea_cut} is equivalent to
	\begin{align}
	\label{ineq:impl:1}
	0 \ge {{\pi}_j^*}^{\mathsf{T}}\left({ b}-A{ x}\right)+
	\left\{
	\begin{array}{l}
	\max_{{w}_j}-{{u}_j^*}^{\mathsf{T}}C{ w}_j\\
	\text{s.t.}\ G{ w}_j\le { g} + \Delta { x}
	\end{array}\right\}
	\end{align}
We deploy the KKT conditions of the inner-level problem in \eqref{ineq:impl:1} as follows
\begin{subequations}
	\label{ineq:impl:4}
	\begin{align}
	\label{ineq:impl:4:1}
	& G^{\mathsf{T}}{ \lambda}_j=-C^{\mathsf{T}}{ \pi}_j^*\\
	\label{ineq:impl:4:2}
	& { \lambda}_j\ge { 0}\perp G{ w}_j\le { g} + \Delta { x}
	\end{align}
\end{subequations}
where ${\lambda}_j\in \mathbb{R}^r$ is the corresponding dual variable  and \eqref{ineq:impl:4:2} denotes the complementary relaxation conditions. The non-linear complementary conditions \eqref{ineq:impl:4:2} can be exactly linearized through big-M method by introducing the binary supplementary variable ${ z}_j\in\left\{0,1\right\}^{r}$ and a sufficiently large positive number $M$ as follows:
\begin{subequations}
	\label{linear}
	\begin{align}
		\label{linear:1}
	&{ 0}\le { \lambda}_j\le M(1-{ z}_j)\\
		\label{linear:2}
	&{ 0}\le { g} + \Delta { x}-G{ w}_j\le M{ z}_j
	\end{align}
\end{subequations}
Then the MP \eqref{ineq:alg:5} has the following robust counterpart which is a MILP problem.
\begin{subequations}
	\label{ineq:impl:6}
	\begin{align}
	&\min\nolimits_{{x},y^0,{z},{\lambda},{w}} f({x},y^0)\\
	&{\rm{s.t.}}\ {x}\in X,y^0\in Y^0(x)\\
	&\left.
	\begin{array}{ll}
	&0 \ge {{\pi}_j^*}^{\mathsf{T}}\left({ b}-A{ x}\right)- {{\pi}_j^*}^{\mathsf{T}}C{ w}_j\\
    &\text{\eqref{ineq:impl:4:1}, \eqref{linear:1}, \eqref{linear:2}}\\
	&{z}_j\in\left\{0,1\right\}^r,{ \lambda}_j\in\mathbb{R}^r,{ w}_j\in\mathbb{R}^{n_w}
	\end{array}
	\right\}j\in [k]
	\end{align}
\end{subequations}

\subsection{Robust Feasibility Examination Subproblem}
The subproblem in this subsection examines the robust feasibility of given $x^k$ by solving $R(x^k)$. $R(x)$ and its equivalent form are given in \eqref{def:1:6} and \eqref{ineq:alg:2}, respectively. The bi-linear objective item $-{\pi}^{\mathsf{T}}C{w}$ imposes difficulties on solving $R({x})$. Next we provide linear surrogate formulations of $R({x})$. 

The robust feasibility examination problem $R({ x})$ in \eqref{ineq:alg:2} can be equivalently written into
	\begin{align}
	\label{ineq:impl:8}
	R({x})=\max\nolimits_{{\pi}\in \Pi} \left\{{ \pi}^{\mathsf{T}}\left({ b}-A{ x}\right)+
	\begin{array}{l}
	\max_{{ w}}-{\pi}^{\mathsf{T}}C{ w}\\
	{\rm{s.t.}}\ G{ w}\le { g}+\Delta{ x}
	\end{array}
	\right\}.
	\end{align}
Then we deploy the KKT conditions of the inner-level problem, which are
\begin{subequations}
	\label{ineq:impl:9}
	\begin{align}
	\label{ineq:impl:9:1}
	&-{\pi}^{\mathsf{T}}C{ w}= ({ g}+\Delta{ x})^{\mathsf{T}}{ \zeta}\\
	\label{ineq:impl:9:2}
	&{ \zeta}\ge{ 0}\perp G{ w}\le { g}+\Delta{ x}\\
	\label{ineq:impl:9:3}
	&G^{\mathsf{T}}{ \zeta}=-C^{\mathsf{T}}{\pi}
	\end{align}
\end{subequations}
where ${ \zeta}\in\mathbb{R}^{r}$ is the corresponding dual variable. The complementary constraint \eqref{ineq:impl:9:2} can be linearlized by introducing binary supplementary variable ${v}\in\left\{0,1\right\}^{r}$ like what we do to \eqref{ineq:impl:4:2}. Moreover, since strong duality holds, we substitute $-{\pi}^{\mathsf{T}}C{ w}$ by $({g}+\Delta{x})^{\mathsf{T}}{ \zeta}$. 
Then, the subproblem $R(x)$ can be equivalently transformed into the following MILP
\begin{subequations}
\label{ineq:impl:10}
	\begin{align}
	R(x)=&\max\nolimits_{{\pi},{ w},{\zeta},v} { \pi}^{\mathsf{T}}({ b}-A{ x})+({ g}+\Delta { x})^{\mathsf{T}}{\zeta}\\
	&\text{s.t.}\ \pi\in \Pi,\text{ \eqref{ineq:impl:9:3}},\\
	&{ 0}\le{ \zeta}\le M(1-{v})\\
	&{ 0}\le { g}+\Delta{x}-G{w}\le M{v}\\
	&{ v}\in\left\{0,1\right\}^{r},{ \zeta}\in\mathbb{R}^{r}
	\end{align}
\end{subequations}

\subsection{Modified Benders Decomposition Algorithm}
Now we have the overall iterative algorithm, as given in Algorithm \ref{alg}. Convergence and optimality of the Algorithm \ref{alg} are justified by Theorem \ref{theorem:0}. Theorem \ref{theorem:0} indicates that the proposed modified Benders decomposition method can find the optimal solution of ARO-DDU problem \eqref{compact} within finite steps. Proof of Theorem \ref{theorem:0} is given in the Appendix.
\begin{algorithm}[htb]
	\caption{Modified Benders Decomposition Algorithm}
	\label{alg}
	% step 0
	{\bf Step 0: Initialization}\\
	Set $k=0$. Choose an initial solution ${x}^k\in X,y^{0,k}\in Y^0(x^k)$.\\
	% step 1
	{\bf Step 1: Robust Feasibility Examination}\\
	Check robust feasibility of ${x}^k$ by solving $R({ x}^k)$ in \eqref{ineq:impl:10}.
	 Let $({w}_k^*,{\pi}_k^*)$ be the optimum of $R({ x}^k)$. If $R({ x}^k)>0$, $k=k+1$, then go to Step 2. Else if $R({ x}^k)=0$, terminate the algorithm and output the optimal solution $(x^k,y^{0,k})$.\\
	% step 2
    {\bf Step 2: Solve Master Problem (MP)}\\
    Solve the master problem \eqref{ineq:impl:6}. Let $({x}^{k},y^{0,k})$ be the optimum and then go to Step 1.
\end{algorithm}
\begin{theorem}
	\label{theorem:0}
	Let $p$ be the number of extreme points of $\Pi$. Then the Algorithm \ref{alg}  generates an optimal solution to \eqref{compact} in $\mathcal{O}(p)$ iterations.
\end{theorem}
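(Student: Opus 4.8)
The plan is to prove the two assertions of Theorem~\ref{theorem:0} separately: first, that Algorithm~\ref{alg} performs at most $p$ non-terminating iterations, and second, that the pair it returns is globally optimal for \eqref{compact}. The whole argument hinges on one structural fact: whenever the feasibility subproblem $R(x^k)$ is solved and $R(x^k)>0$, its optimizer $\pi_k^*$ may be taken to be an extreme point of $\Pi$, and that extreme point has never been produced in an earlier iteration. Since $\Pi$ has only $p$ extreme points, at most $p$ distinct feasibility cuts can ever be appended to the master problem, which caps the iteration count at $\mathcal{O}(p)$.

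First I would show that the subproblem returns an extreme point. Writing \eqref{ineq:impl:8} as $R(x)=\max_{\pi\in\Pi} h_x(\pi)$ with $h_x(\pi):=\pi^{\mathsf T}(b-Ax)+\max_{w:\,Gw\le g+\Delta x}\bigl(-\pi^{\mathsf T}Cw\bigr)$, observe that for each fixed feasible $w$ the bracketed term is affine in $\pi$, so the inner maximum is a pointwise supremum of affine functions and hence convex in $\pi$; adding the affine term $\pi^{\mathsf T}(b-Ax)$ leaves $h_x$ convex. A convex function attains its maximum over the polytope $\Pi$ at an extreme point, so we may assume $\pi_k^*\in\operatorname{ext}(\Pi)$ without loss of generality.

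Next I would establish the no-repetition property. Suppose cuts indexed by $j$ with extreme points $\pi_j^*$ have already been generated and the current master solution is $x^k$. By construction of \eqref{ineq:impl:6}, $x^k$ satisfies every feasibility cut, i.e. $\max_{w\in\mathcal{W}(x^k)}{\pi_j^*}^{\mathsf T}(b-Ax^k-Cw)\le 0$ for each previously generated $j$. If the subsequent examination returns $R(x^k)>0$, then the new optimizer satisfies $\max_{w\in\mathcal{W}(x^k)}{\pi_k^*}^{\mathsf T}(b-Ax^k-Cw)=R(x^k)>0$, which is incompatible with the displayed inequality for any earlier index. Hence $\pi_k^*\notin\{\pi_j^*\}$, so a strictly new extreme point appears at every non-terminating iteration; after at most $p$ of them the subproblem must report $R\le 0$, yielding the $\mathcal{O}(p)$ bound.

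Finally I would argue optimality at termination. Because each cut uses $\pi_j^*\in\Pi$, every feasibility cut \eqref{fea_cut} is implied by the full robust constraint \eqref{equivalent:3} (restrict $\pi$ to $\pi_j^*$); thus each master problem is a relaxation of \eqref{equivalent} whose value lower-bounds the true optimum. When the loop exits with $R(x^k)=0$ we have $x^k\in X_R$, and since the master already enforces $x^k\in X$ and $y^{0,k}\in Y^0(x^k)$, the returned pair is feasible for \eqref{equivalent}, which is equivalent to \eqref{compact}. Feasibility gives $f(x^k,y^{0,k})\ge$ opt, the relaxation bound gives $\le$ opt, and the two coincide. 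The delicate step is the coupling in the no-repetition argument: one must verify that the master's semi-infinite feasibility cut over $w$ and the subproblem's objective share exactly the quantity $\max_{w\in\mathcal{W}(x^k)}\pi^{\mathsf T}(b-Ax^k-Cw)$, so that satisfaction of old cuts genuinely forces the new extreme point to differ; securing the extremality claim under possible degeneracy of $\Pi$ is the other point that needs care.
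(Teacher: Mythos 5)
Your proposal is correct and follows essentially the same route as the paper's own proof: the same three pillars (the subproblem's optimal $\pi$ can be taken at an extreme point of $\Pi$, no extreme point is ever repeated by the cut-satisfaction contradiction, and the relaxation/feasibility sandwich $f^k\le f^*\le f^k$ at termination) appear as Lemma 1(d)--(e) and Lemma 1(b)--(c) in the appendix. Your convexity argument for extremality is a slightly cleaner justification than the paper's appeal to bilinear programming, but the substance is identical.
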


\section{Case Studies}
In this section, case studies are conducted on MATLAB with a laptop with Intel i5-8250U 1.60GHz CPU and 4GB of RAM. GUROBI 9.1.0 is used as the solver.

\subsection{Setup}
We consider a VPP that consists of four conventional generators, a wind farm, an energy storage facility, a flexible load, and three fixed loads. The schematic diagram of the VPP is given in Fig.\ref{fig1}. For the DA robust scheduling of the VPP, 24 hourly periods are considered, i.e., $\vert T\vert=24$.
\begin{figure}[htb]
	\centering
	\includegraphics[width=0.8\linewidth]{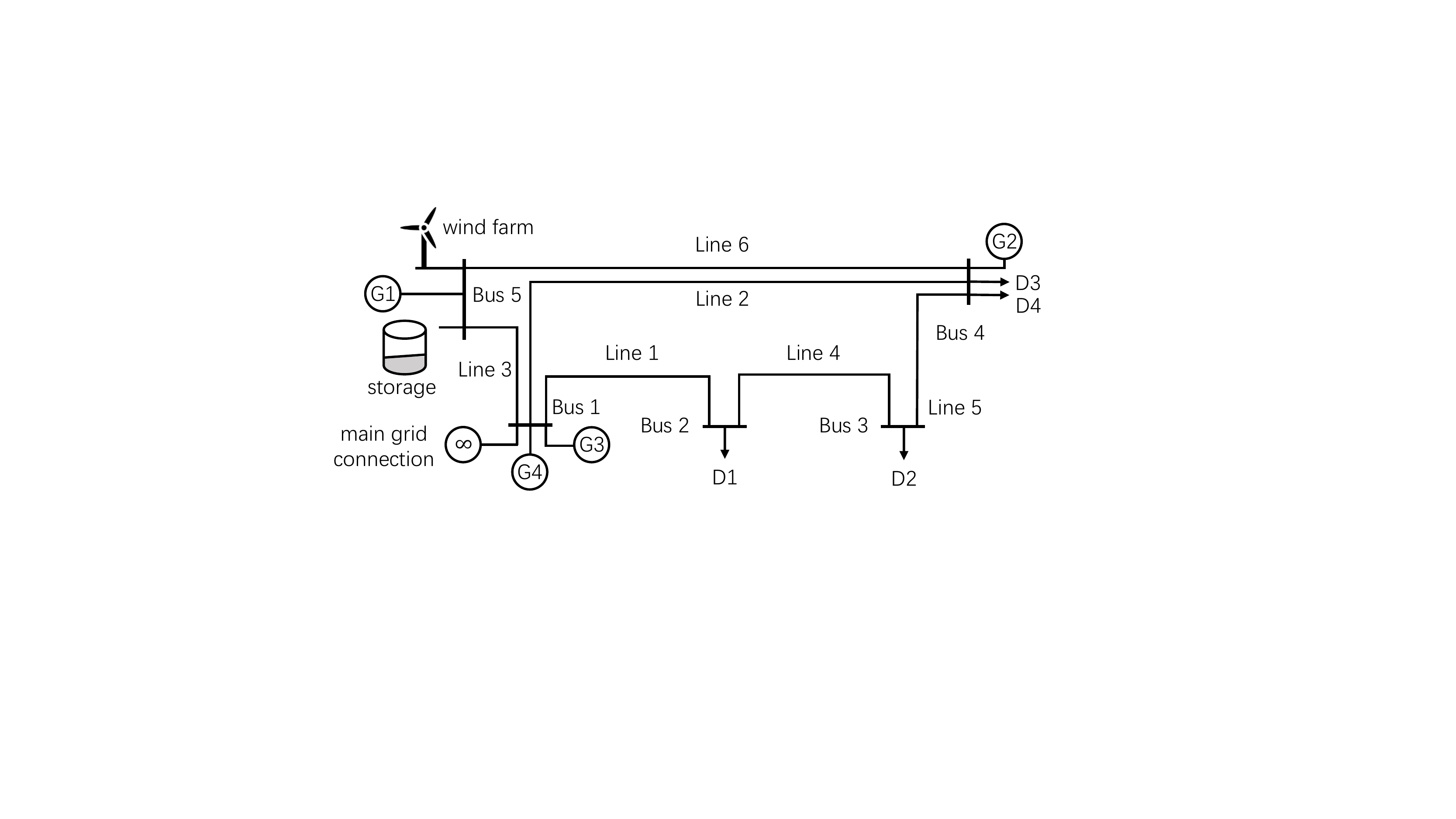}
	\caption{5-bus network.}
	\label{fig1}
\end{figure}

Parameters of the four conventional generators are provided in Table.\ref{table:gen}. The 400MW wind farm is located at Bus 5, and the confidence bounds and average levels for available wind power generation are illustrated in Fig.\ref{fig2_3}. Technical data of the fixed and flexible loads are provided in Table.\ref{table:load}, and the daily profiles of the total fixed load are shown in Fig.\ref{fig2_3}. The storage facility is located at Bus 5, with a  capacity of 100 MW/200 MW.h and conversion efficiency of 90\%. The maximum and minimum SoC are 180MW.h and 20MW.h, respectively.

The VPP is connected to the main grid at Bus 1. The maximum power that can be obtained from or sold to the main grid is 400 MW. The maximum participation in DA reserve market is 250 MW at each time slot, for both up- and down- reserve market. The maximum deployed reserve energy is 6000 MW.h ($250\ {\rm MW}\times 24\ {\rm h}$), for both up- and down- reserve deployment requests. Market price scenarios are generated from Nord Pool price data from October 25th to November 25th, 2020\cite{NordPool}, through K-means clustering. Therefore, the uncertain market prices are represented by 8 typical equiprobable scenarios.
\begin{table}[htb]
	\caption{Parameters of conventional generators.}
	\label{table:gen}
	\setlength{\tabcolsep}{3pt}
	\centering
	\begin{tabular}{m{15pt}<{\centering}m{20pt}<{\centering}m{35pt}<{\centering}m{25pt}<{\centering}m{25pt}<{\centering}m{25pt}<{\centering}m{25pt}<{\centering}m{25pt}<{\centering}}
		\toprule
        &
        Loc &
        \begin{tabular}[c]{@{}c@{}}{[}$\underline{P}_i^{Gen},\overline{P}_i^{Gen}${]}\\ (MW)\end{tabular} &
        \begin{tabular}[c]{@{}c@{}}$R_i^+,R_i^-$,\\ $R_i^{SU},R_i^{SD}$\\ (MW)\end{tabular} &
        \begin{tabular}[c]{@{}c@{}}$T_i^{on}$,\\ $T_i^{off}$\\ (hour)\end{tabular} &
        \begin{tabular}[c]{@{}c@{}}$C_i^{SU}$,\\ $C_i^{SD}$\\ (\$/times)\end{tabular} &
        \begin{tabular}[c]{@{}c@{}}$C_i^{Gen,0}$\\ (\$/h)\end{tabular} &
        \begin{tabular}[c]{@{}c@{}}$C_i^{Gen,1}$\\ (\$/MW.h)\end{tabular} \\[3pt]
		\hline
		G1 & Bus 5 & {[}200,400{]} & 50  & 6 & 100 & 50 & 40 \\[3pt]
		G2 & Bus 4 & {[}150,300{]} & 50  & 5 & 100 & 50 & 60 \\[3pt]
		G3 & Bus 1 & {[}150,250{]} & 100 & 4 & 100 & 50 & 70 \\[3pt]
		G4 & Bus 1 & {[}250,500{]} & 80  & 6 & 100 & 50 & 50\\[3pt]
		\bottomrule
	\end{tabular}
\end{table}
\begin{table}[!ht]
	\caption{Parameters of Load.}
	\label{table:load}
	\setlength{\tabcolsep}{3pt}
	\centering
	\begin{tabular}{ccccccc}
		\toprule
		&
		Loc &
		Type &
		Ratio &
		\begin{tabular}[c]{@{}c@{}}$\underline{D}_i^D$\\ (MWh)\end{tabular} &
		\begin{tabular}[c]{@{}c@{}}{[}$\underline{P}_i^D,\overline{P}_i^D${]}\\ (MW)\end{tabular} &
		\begin{tabular}[c]{@{}c@{}}$r_i^{D-},r_i^{D+}$\\ (MW)\end{tabular} \\[3pt]
		\hline
		D1 & Bus 2 & fixed    & 0.3 & -    & -           & -   \\[3pt]
		D2 & Bus 3 & fixed    & 0.3 & -    & -           & -   \\[3pt]
		D3 & Bus 4 & fixed    & 0.4 & -    & -           & -   \\[3pt]
		D4 & Bus 4 & flexible & -   & 1500 & {[}0,200{]} & 110\\[3pt]
		\bottomrule
	\end{tabular}
\end{table}
\begin{figure}[!ht]
	\centering
	\includegraphics[width=0.9\linewidth]{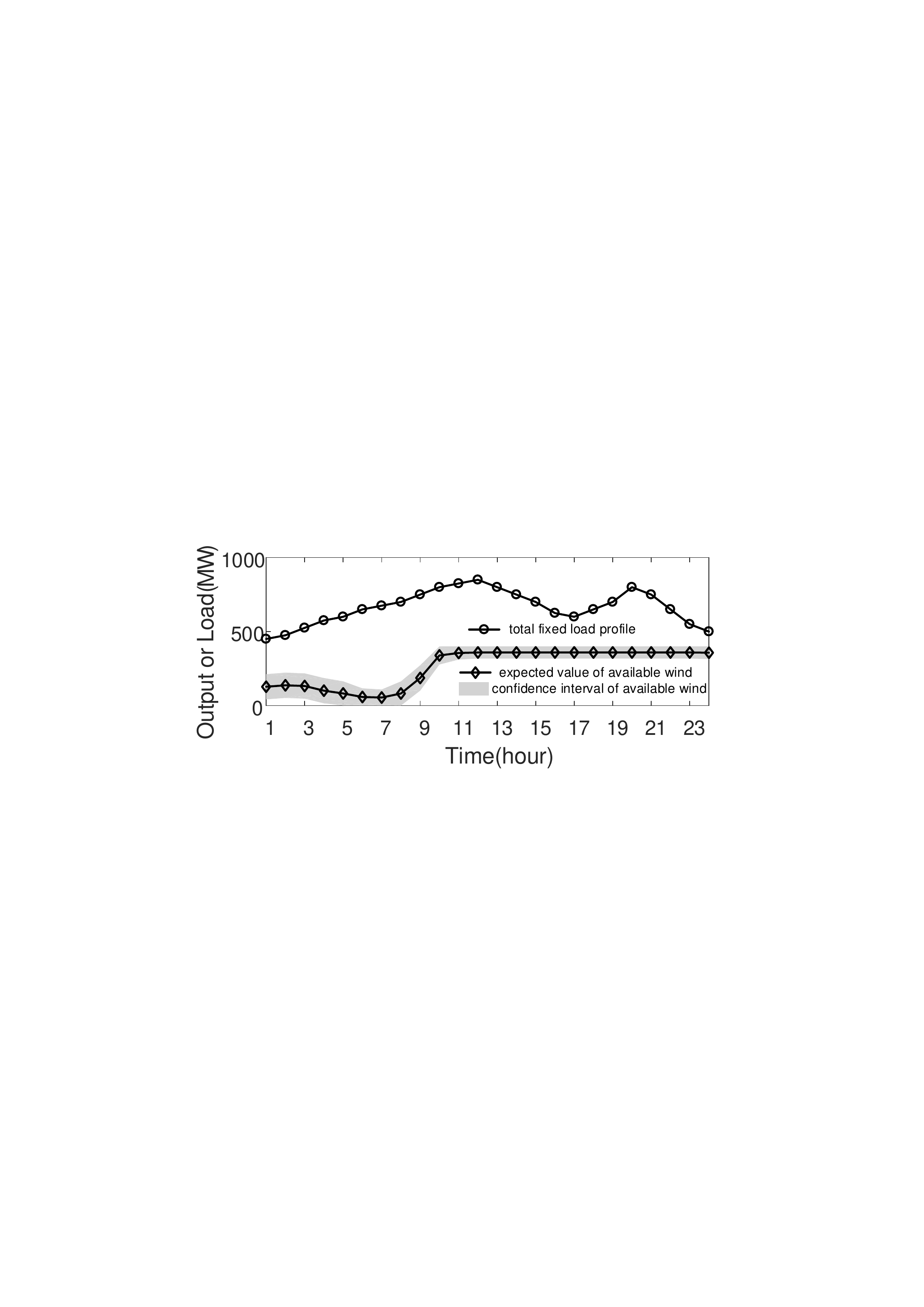}
	\caption{Total fixed load profile; Confidence bounds and average levels for available wind power generation.}
	\label{fig2_3}
\end{figure}

\subsection{Baseline Results}
\label{results}
In this subsection, wind uncertainty budgets are fixed as $\Gamma^T = 8,\Gamma^S=1$. We solve the stochastic robust scheduling problem of VPP by the proposed Algorithm.\ref{alg}. The algorithm converges after 25 iteration rounds, the evolution process of which is depicted in Fig.\ref{fig6}-\ref{fig7}. The increasing net cost and the diminishing reserve revenue represents VPP's hedging against the worst-case realization of uncertainties concerning available wind generation and reserve deployment requests.

\begin{figure}[htb]
	\centering
	\includegraphics[width=0.9\linewidth]{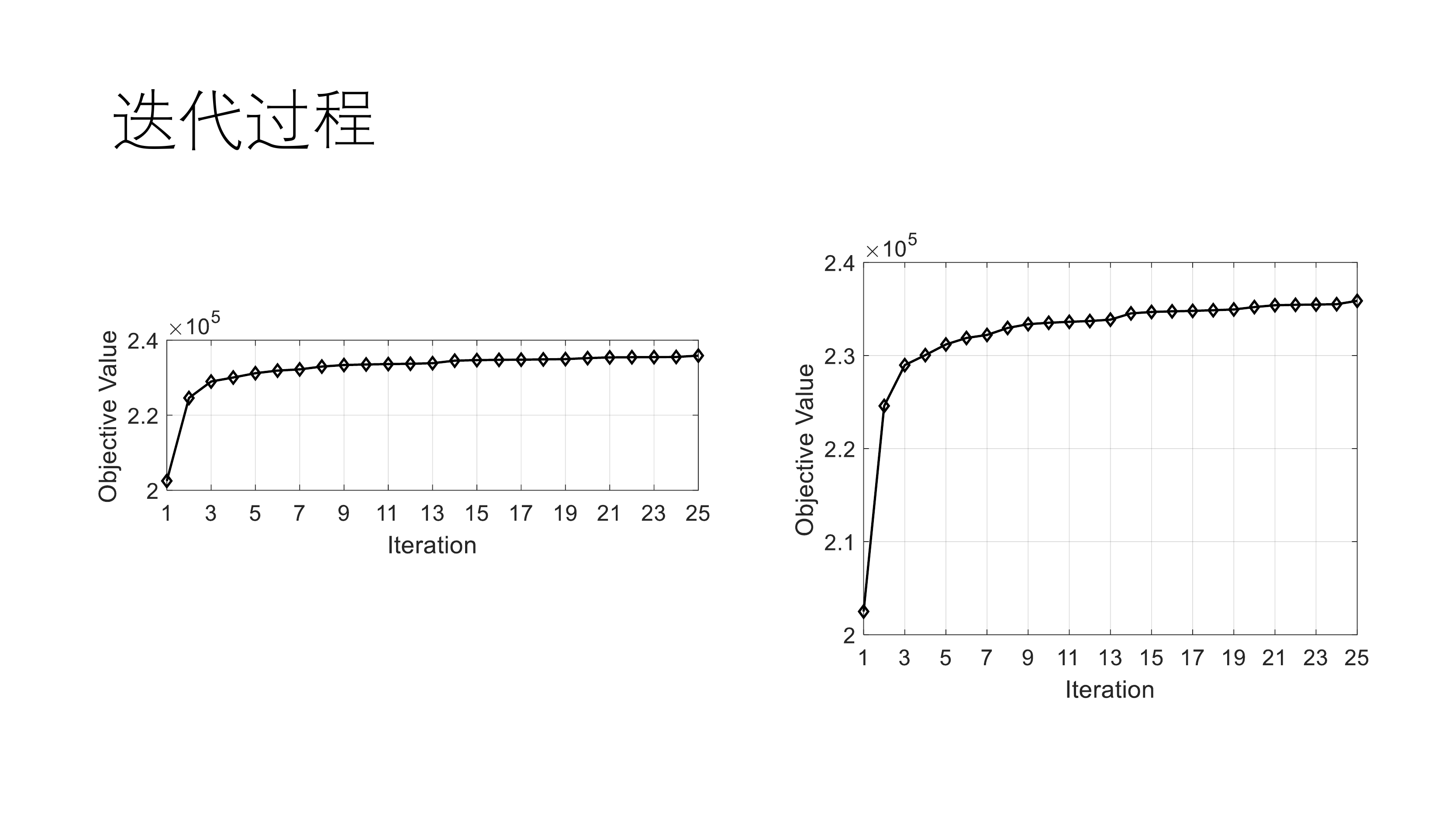}
	\caption{Evolution of objective value with the number of iterations.}
	\label{fig6}
\end{figure}

\begin{figure}[htb]
	\centering
	\includegraphics[width=0.9\linewidth]{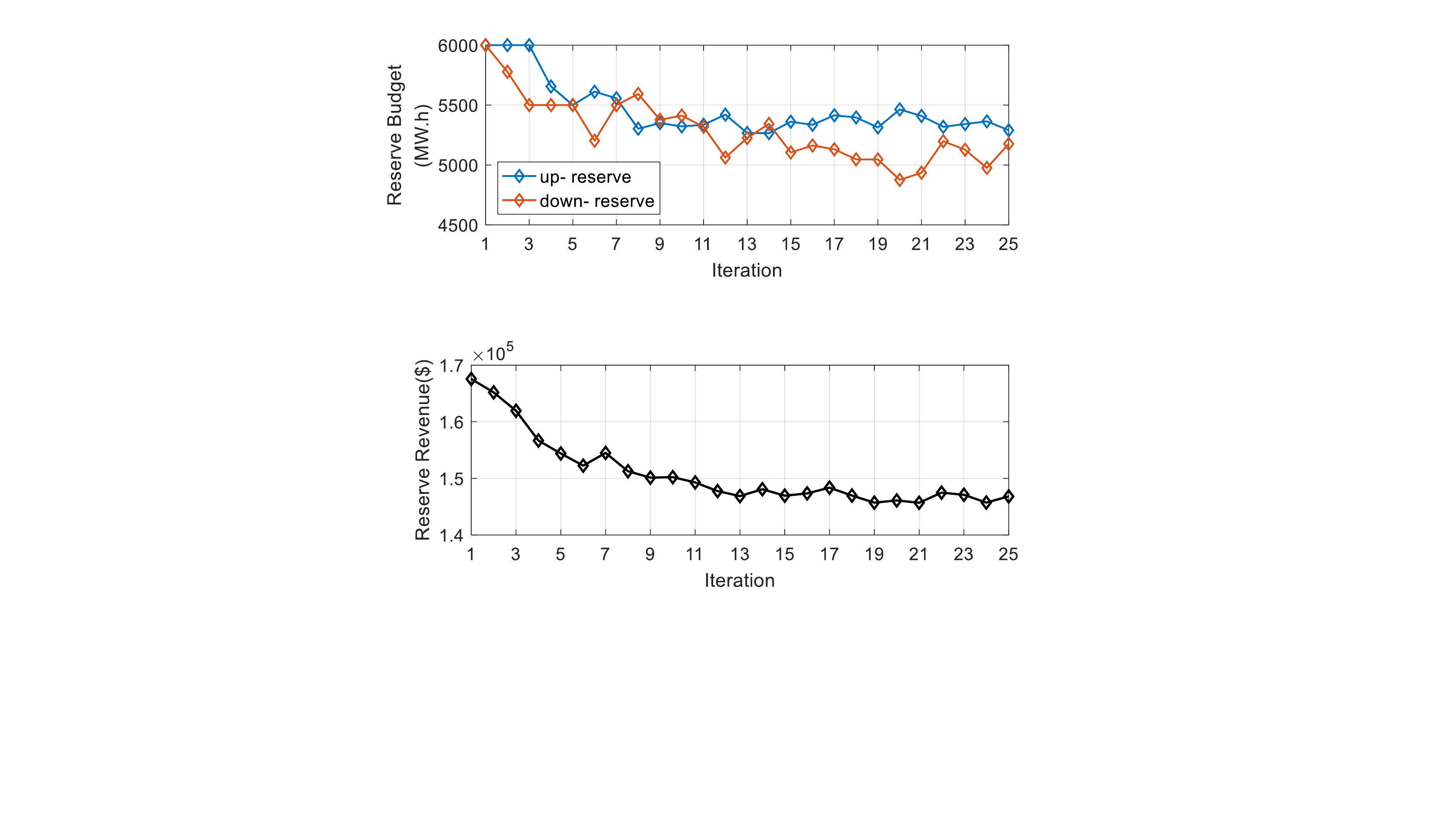}
	\caption{Evolution of reserve revenue with the number of iterations.}
	\label{fig8}
\end{figure}

\begin{figure}[htb]
	\centering
	\includegraphics[width=0.9\linewidth]{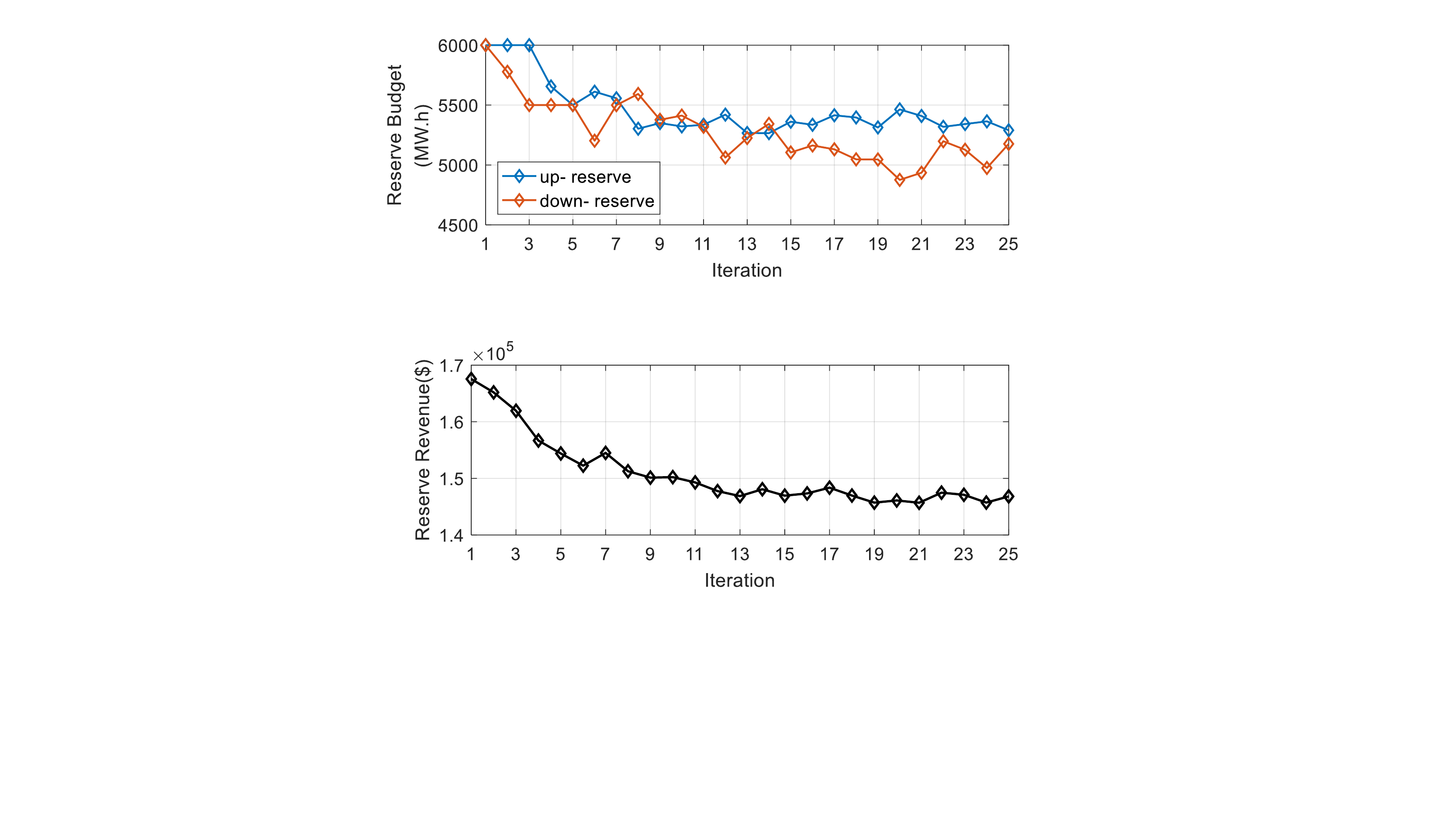}
	\caption{Evolution of reserve budget with the number of iterations.}
	\label{fig7}
\end{figure}

\subsection{Sensitivity Analysis}
\subsubsection{Impact of Wind Uncertainty Budget}
In this case, 7 time budgets of wind uncertainty $\Gamma^T$ from 0 to 12 with a gradient of 2 are introduced. Since there is only one wind farm in the VPP, impact of space robustness parameter $\Gamma^S$ is omitted since $\Gamma^S$ is fixed as 1. Participation in DA energy-reserve market in terms of different $\Gamma^T$ is provided in Table.\ref{table:GammaT}. It is observed that the VPP robust scheduling decisions exhibit different tendencies in DA energy and reserve market when responding to different wind uncertainty budget: as the value of $\Gamma^T$ increases, the amount of reserve offering provided in DA market trends to decrease, while there is no obvious trend for the energy offering in DA market. This is because the VPP would like to keep more ramping resources inside the VPP to hedge against the increasing uncertainty of available wind generation power. An increasing net cost of VPP can also be observed as the value of $\Gamma^T$ increases, indicating that a larger uncertainty set always comes with a higher price of robustness.

\begin{table}[htb]
	\caption{Impact of robustness parameter $\Gamma^T$ on VPP's cost and revenue.}
	\label{table:GammaT}
	\setlength{\tabcolsep}{3pt}
	\centering
	\begin{tabular}{ccccc}
		\toprule
		$\Gamma^T$ &
		\begin{tabular}[c]{@{}c@{}}Net \\ Cost(\$)\end{tabular} &
		\begin{tabular}[c]{@{}c@{}}Energy Market \\ Revenue(\$)\end{tabular} &
		\begin{tabular}[c]{@{}c@{}}Reserve Market \\ Revenue(\$)\end{tabular} &
		\begin{tabular}[c]{@{}c@{}}CPP \\ Cost(\$)\end{tabular} \\[3pt]
		\hline
		0  & 226438.18 & 273204.95 & 154713.67 & 654356.80 \\[3pt]
		2  & 230191.26 & 291456.07 & 152588.30 & 674235.63 \\[3pt]
		4  & 231978.20 & 270391.64 & 149658.90 & 652028.74 \\[3pt]
		6  & 233943.76 & 272301.82 & 148452.17 & 654697.74 \\[3pt]
		8  & 235871.82 & 274171.84 & 146816.15 & 656859.81 \\[3pt]
		10 & 236637.36 & 268425.54 & 144729.88 & 649792.77 \\[3pt]
		12 & 239033.89 & 269952.13 & 144610.50 & 653596.51\\[3pt]
		\bottomrule
	\end{tabular}
\end{table}

\subsubsection{Impact of $\mu^{RE+}$ and $\mu^{RE-}$}
In this case, we present the impact of upward and downward reserve energy price $\mu^{RE+}$ and $\mu^{RE-}$ on the reserve offering behavior of VPP. The results are displayed in Table.\ref{Table:muRE+}-\ref{Table:muRE-}. It is observed that as $\mu^{RE+}$($\mu^{RE-}$) increases, the amount of reserve capacity and reserve energy trend to increase. Certainly, the reserve deployment uncertainty would rise accordingly,  but since the reserve revenue is high, the VPP would like to sacrifice more in DA energy market or pay more for CPP generation cost to hedge against a severer realization of the worst-case reserve deployment. Conversely, if the value of $\mu^{RE+}$ and $\mu^{RE-}$ are relatively small, the VPP trends to slash the reserve offering directly to restrict the uncertainty and ensure robust feasibility.

\begin{table}[htb]
	\caption{Impact of $\mu^{RE+}$ on VPP's cost and revenue with fixed $\mu^{RE-}=8$.}
	\label{Table:muRE+}
	\setlength{\tabcolsep}{3pt}
	\centering
	\begin{tabular}{cccccc}
		\toprule
		$\mu^{RE+}$ &
		\begin{tabular}[c]{@{}c@{}}Net\\ Cost(\$)\end{tabular} &
		\begin{tabular}[c]{@{}c@{}}Energy \\ Market\\ Revenue(\$)\end{tabular} &
		\begin{tabular}[c]{@{}c@{}}Reserve \\ Capacity\\ Revenue(\$)\end{tabular} &
		\begin{tabular}[c]{@{}c@{}}Reserve \\ Energy\\ Revenue(\$)\end{tabular} &
		\begin{tabular}[c]{@{}c@{}}CPP\\ Cost(\$)\end{tabular} \\[3pt]
		\hline
		6  & 246468.82 & 267631.40 & 62422.43 & 72524.55  & 649047.20 \\[3pt]
		8  & 235871.82 & 274171.84 & 63090.85 & 83725.30  & 656859.81 \\[3pt]
		10 & 225351.63 & 266644.06 & 63187.46 & 93450.33  & 648633.48 \\[3pt]
		12 & 213793.07 & 253643.08 & 65379.32 & 108422.57 & 641238.04 \\[3pt]
		14 & 202718.58 & 247247.32 & 65694.42 & 119613.80 & 635274.11\\[3pt]
		\bottomrule
	\end{tabular}
\end{table}

\begin{table}[htb]
	\caption{Impact of $\mu^{RE-}$ on VPP's cost and revenue with fixed $\mu^{RE+}=8$.}
	\label{Table:muRE-}
	\setlength{\tabcolsep}{3pt}
	\centering
	\begin{tabular}{cccccc}
		\toprule
		$\mu^{RE-}$ &
		\begin{tabular}[c]{@{}c@{}}Net\\ Cost(\$)\end{tabular} &
		\begin{tabular}[c]{@{}c@{}}Energy \\ Market\\ Revenue(\$)\end{tabular} &
		\begin{tabular}[c]{@{}c@{}}Reserve \\ Capacity\\ Revenue(\$)\end{tabular} &
		\begin{tabular}[c]{@{}c@{}}Reserve \\ Energy\\ Revenue(\$)\end{tabular} &
		\begin{tabular}[c]{@{}c@{}}CPP\\ Cost(\$)\end{tabular} \\[3pt]
		\hline
		6  & 246081.71 & 263685.24 & 62631.36 & 72210.65  & 644608.96 \\[3pt]
		8  & 235871.82 & 274171.84 & 63090.85 & 83725.30  & 656859.81 \\[3pt]
		10 & 225589.90 & 276759.48 & 63254.95 & 95242.94  & 660847.28 \\[3pt]
		12 & 213779.46 & 281404.67 & 62394.43 & 107333.27 & 664911.84 \\[3pt]
		14 & 203425.38 & 289312.44 & 63227.29 & 119525.46 & 675490.57\\[3pt]
		\bottomrule
	\end{tabular}
\end{table}

\subsection{Comparative Performance Study}
\subsubsection{Comparison between DIU and DDU}
To present the competitiveness of the proposed decision-dependent uncertain regulating signal formulation \eqref{uncertain:P2}, a decision-independent formulation is introduced in \eqref{DIU} as a reference case. In \eqref{DIU}, $V$ is a decision-independent set where $v_t^{R+}$ and $v_t^{R-}$ are the binary variables to model the worst-case upward and downward reserve deployment request, respectively. $\Gamma^R\in\left\{0,1,...,24\right\}$ is the reserve uncertainty budget parameter which controls the conservativeness of the model in \eqref{DIU} and is pre-determined before the robust scheduling of VPP.

\begin{subequations}
\label{DIU}
\begin{align}
p^{EXCH}_t=p^E_t+ v_t^{R+}p_t^{R+} + v_t^{R-}p_t^{R-},\forall t\in T
\end{align}
where
\begin{align}
v_t^{R+},v_t^{R-}\in\label{V}
V:=\left\{
\begin{array}{l}
v^{R+},v^{R-}\in\left\{0,1\right\}^{\vert T\vert}:\\
v_t^{R+} + v_t^{R-}\le 1,\forall t\in T\\
\sum_{t=1}^{\vert T\vert}(v_t^{R+} + v_t^{R-})\le \Gamma^R
\end{array}
\right\}
\end{align}
\end{subequations}

Next, we conduct a comparative performance study on the DIU set \eqref{DIU} and the proposed DDU formulation \eqref{uncertain:P2}. Robust scheduling with DIU set \eqref{DIU} is solved by C\&CG algorithm. The first case is set up with $\Gamma^R=0$ in \eqref{DIU} and $E^{R+}=E^{R-}=0$ in \eqref{uncertain:P2}, respectively. It turns out that they obtain the same result that the net cost of VPP is 298495.42\$. We assume it to be the objective value of the nominal problem where no reserve deployment uncertainty exists and the price of robustness is calculated based on this value in the following cases. The second to the fourth cases study the impact of $\Gamma^R,E^{R+},E^{R-}$ on the price of robustness for VPP respectively and the results are depicted in Fig.\ref{fig20}-\ref{fig25}. As can be observed, price of robustness rises with an increasing uncertainty budget, but exhibits a different rate of change in DIU and DDU formulations. From the view of price of robustness, DIU set with $\Gamma^R=16$ is approximately a counterpart of the DDU set with decisions $E^{R+}=4800,E^{R-}=5175.95$. Recall that the optimal $E^{R+},E^{R-}$ are 5289.72 and 5175.95 respectively according to the results in subsection \ref{results}, indicating that a higher level of reserve budget is tolerable for VPP, considering the reserve energy revenue it provides. The proposed DDU formulation has the capability and incentive to strike the balance between robustness and profitability, by optimizing over the reserve budget rather than regarding it as a fixed parameter.

\begin{figure}[!htb]
	\centering
	\includegraphics[width=0.9\linewidth]{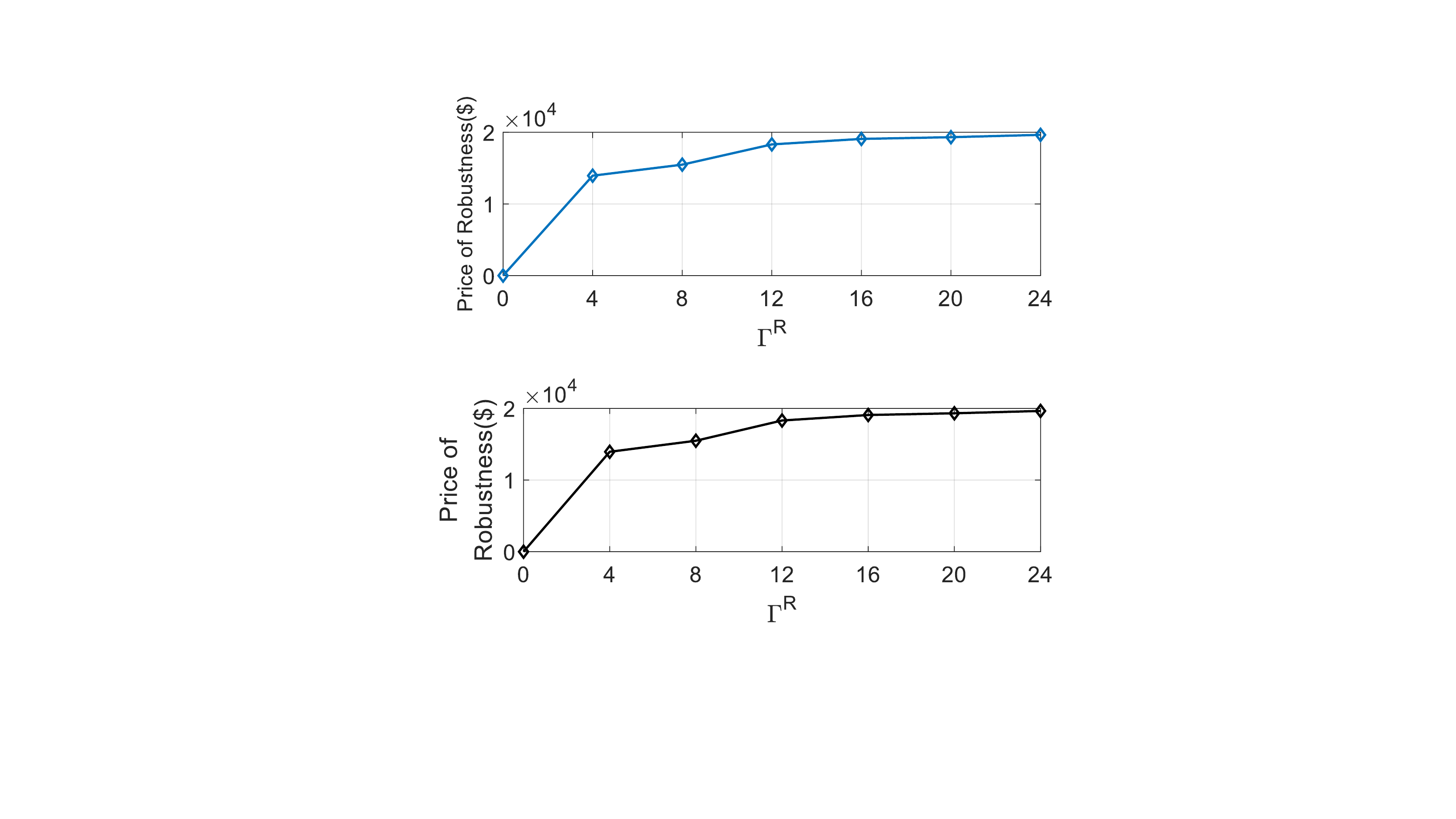}
	\caption{Impact of robustness parameter $\Gamma^R$ on the price of robustness.}
	\label{fig20}
\end{figure}
\begin{figure}[!htb]
	\centering
	\includegraphics[width=0.9\linewidth]{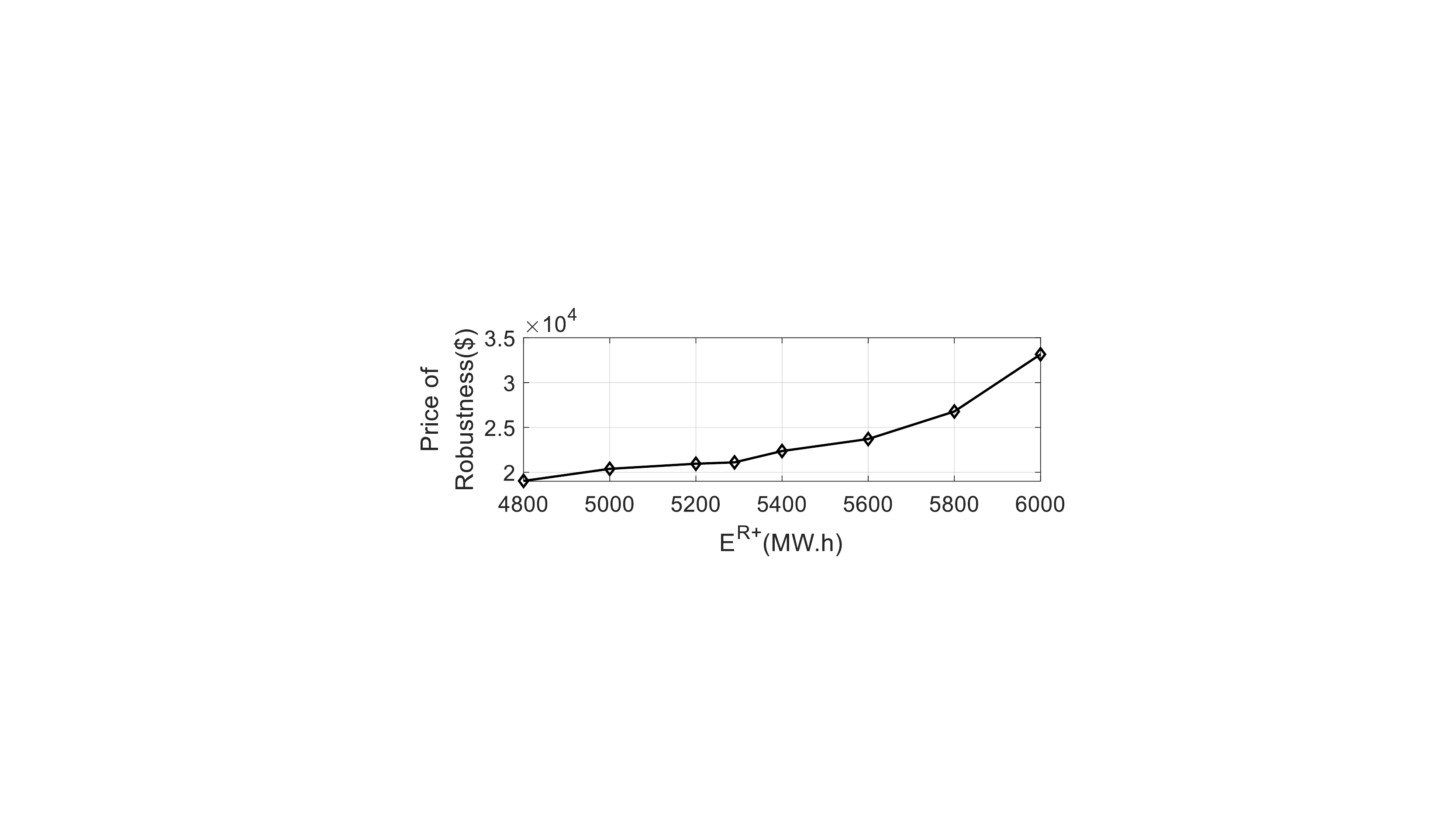}
	\caption{Impact of $E^{R+}$ on the price of robustness when fixing $E^{E-}$ to 5175.95.}
	\label{fig23}
\end{figure}
\begin{figure}[!htb]
	\centering
	\includegraphics[width=0.9\linewidth]{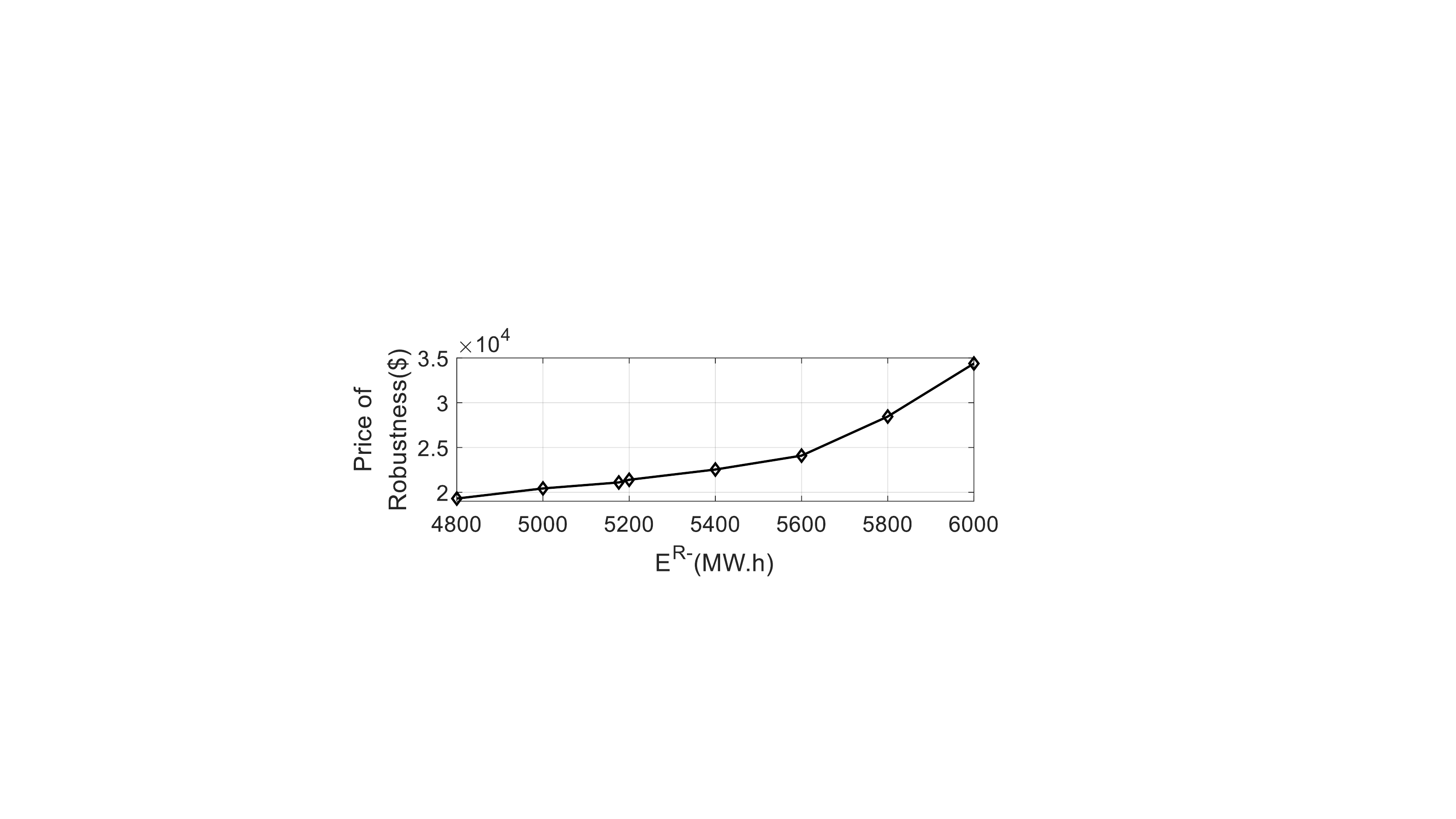}
	\caption{Impact of $E^{R-}$ on the price of robustness when fixing $E^{R+}$ to 5289.72.}
	\label{fig25}
\end{figure}

\subsubsection{Comparison between C\&CG algorithm and the proposed algorithm}
To emphasize the necessity of the proposed algorithm for decision-dependent robust optimization problem, we apply the widely used C\&CG algorithm to the problem and show how the C\&CG algorithm fails to guarantee solution optimality when the uncertainty is decision-dependent. Evolutions of objective value with the number of iterations in both algorithms are depicted in Fig.\ref{fig27}. The C\&CG algorithm converges fast, after 4 iteration rounds. However, the net cost of VPP derived by C\&CG algorithm is much greater than its optimal value. This is because, in the C\&CG algorithm, feasibility cut is directly generated by the worst-case uncertainty, ignoring that the uncertainty set is varying with decisions. The worst-case uncertainty realization in previous iterations may no more lie in the uncertainty set under some other decisions. Thus the feasibility cut of C\&CG algorithm may ruin the optimality of the solution, leading to over-conservative results.
 
\begin{figure}[ht]
	\centering
	\includegraphics[width=0.9\linewidth]{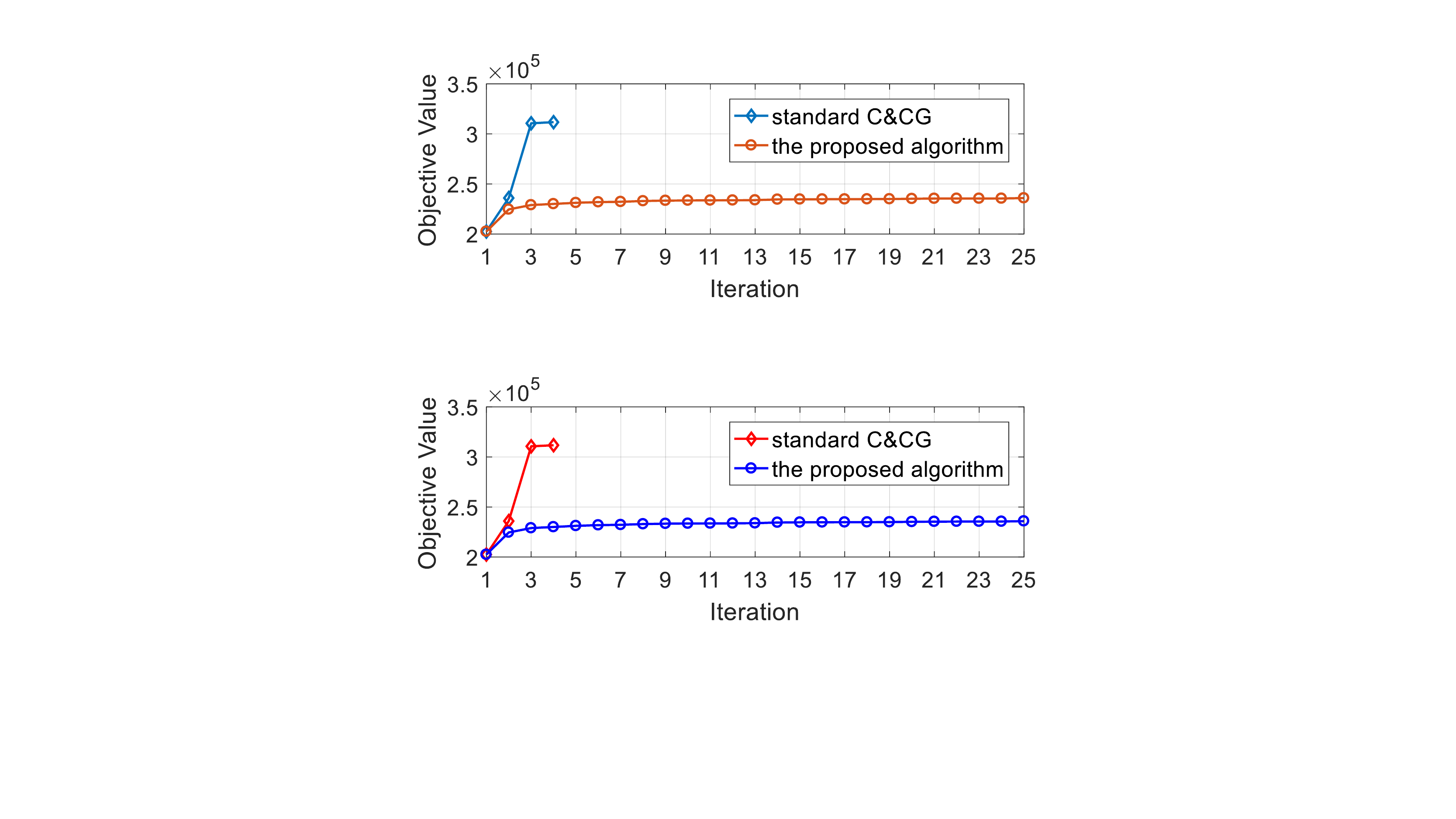}
	\caption{Comparision between the standard C\&CG algorithm and the proposed algorithm.}
	\label{fig27}
\end{figure}

% comparison between standard C\&CG algorithm and the proposed algorithm in VPP robust scheduling 

\section{Conclusion}
A novel stochastic adaptive robust optimization approach dealing with decision-dependent uncertainties is proposed in this paper for the DA scheduling strategies of a VPP participating in energy-reserve market. Consideration of the decision dependency of uncertain reserve deployment requests on VPP's offering in reserve market reduced the robustness of robust scheduling. The VPP determined the optimal level of robustness, striking a balance between the price of robustness and its profitability in the market. The proposed modified Benders decomposition algorithm obtained the optimum scheduling result under decision-dependent uncertainties, covering the shortage of standard C\&CG algorithm. Future works will address the consideration of better computational efficiency and a wider variety of decision dependent uncertainty sets.

\appendix
We start the proof of Theorem \ref{theorem:0} with the following lemmas.

\begin{lemma}
	\label{lemma:1}
	Let $f^*$ denote the optimal objective value of ARO-DDU \eqref{compact}. $k$ denotes the iteration round of Algorithm.\ref{alg}. Define $f^k:=f(x^k,y^{0,k})$. Then,
	\begin{itemize}
		\item [(a)] $f^k$ is monotonously non-decreasing with respect to $k$. 
		\item [(b)] For any $k\in\mathbb{Z}^+$, $f^k\le f^*$.
		\item [(c)] For any $k\in\mathbb{Z}^+$, if $R(x^k)=0$, $f^k\ge f^*$.
		\item [(d)] For any $k\in\mathbb{Z}^+$ and any $j\in [k]$, $\pi_j^*\in {\rm vert}(\Pi)$ where the set ${\rm vert}(\Pi)$ represents all the vertices of the polytope $\Pi$.
		\item [(e)] For any $k\in\mathbb{Z}^+$, $\forall j_1,j_2\in [k]$ and $j_1\neq j_2$, ${\pi}_{j_1}^*\neq {\pi}_{j_2}^*$.
	\end{itemize}
\end{lemma}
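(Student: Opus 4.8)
The plan is to prove the five claims in an order that respects their dependencies: (d) first, then (b) and (a), then (c), and finally (e), which is the engine of finite termination. For (d), recall from \eqref{ineq:alg:2} that each $\pi_j^*$ is the $\pi$-component of an optimizer of $R(\hat x)=\max_{w\in\mathcal{W}(\hat x)}\max_{\pi\in\Pi}\pi^{\mathsf{T}}(b-A\hat x-Cw)$, where $\hat x$ is the iterate tested at that stage. Fixing the worst-case $w=w_j^*$, the vector $\pi_j^*$ maximizes the linear function $\pi^{\mathsf{T}}(b-A\hat x-Cw_j^*)$ over the bounded polyhedron $\Pi=\{\pi:B^{\mathsf{T}}\pi\le 0,\ -1\le\pi\le 0\}$. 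The optimizers of a linear program over a polytope form a face, and every nonempty face contains a vertex of the polytope, so among the optimizers of $R(\hat x)$ there is always one with $\pi_j^*\in{\rm vert}(\Pi)$; the MILP \eqref{ineq:impl:10} is merely the computational vehicle for this maximization. As $\Pi$ is bounded, ${\rm vert}(\Pi)$ is finite, with cardinality $p$.

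With (d) in hand, (b) and (a) are structural. By (d) every cut uses $\pi_j^*\in\Pi$, so the feasibility cut \eqref{fea_cut} is precisely the robust constraint \eqref{equivalent:3} restricted to the finite subset $\{\pi_1^*,\dots,\pi_k^*\}\subseteq\Pi$. Hence the master problem \eqref{ineq:alg:5} relaxes \eqref{equivalent}: its feasible region contains that of \eqref{equivalent}, and since the two share the objective $f$, we obtain $f^k\le f^*$, which is (b). For (a), advancing from iteration $k$ to $k+1$ only appends one additional cut, so the master's feasible region can only shrink; minimizing the fixed objective over a shrinking set gives $f^{k+1}\ge f^k$.

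Claim (c) uses the equivalence $x\in X_R\Leftrightarrow R(x)\le 0$ noted after \eqref{def:1:6}. If $R(x^k)=0$, then $x^k\in X_R$, and since the master problem also enforces $x^k\in X$ and $y^{0,k}\in Y^0(x^k)$, the pair $(x^k,y^{0,k})$ is feasible for \eqref{compact}. A feasible point of a minimization cannot undercut the optimum, so $f^k\ge f^*$; combined with (b) this forces $f^k=f^*$ at termination.

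The main obstacle is (e). I would argue by contradiction: suppose $\pi_{j_1}^*=\pi_{j_2}^*$ with $j_1<j_2$. Let $\hat x$ be the iterate whose robust-feasibility test produced $\pi_{j_2}^*$ and let $w_{j_2}^*\in\mathcal{W}(\hat x)$ be its worst-case uncertainty, so that $(\pi_{j_2}^*)^{\mathsf{T}}(b-A\hat x-Cw_{j_2}^*)=R(\hat x)>0$. But $\hat x$ was returned by a master problem that already contained the cut indexed $j_1$; evaluating \eqref{fea_cut} at $x=\hat x$ and $w=w_{j_2}^*$ gives $(\pi_{j_1}^*)^{\mathsf{T}}(b-A\hat x-Cw_{j_2}^*)\le 0$. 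Since $\pi_{j_1}^*=\pi_{j_2}^*$, these two statements contradict each other. The delicate part is the index bookkeeping — verifying that the $j_1$-th cut is present in the master problem producing $\hat x$, and that $w_{j_2}^*$ indeed lies in $\mathcal{W}(\hat x)$ at that same iterate — after which the contradiction is immediate. Finally, (d) and (e) together show that each iteration contributes a distinct vertex of $\Pi$, of which there are only $p$; combined with (a)--(c), which force $f^k=f^*$ once $R(x^k)=0$, this yields the $\mathcal{O}(p)$ bound of Theorem \ref{theorem:0}.
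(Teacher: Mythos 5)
Your proposal is correct and follows essentially the same route as the paper's own proof: (a)--(b) from the master problem being a progressively tightened relaxation of \eqref{equivalent}, (c) from $R(x^k)=0\Leftrightarrow x^k\in X_R$, (d) from linear optimization over the polytope $\Pi$ attaining its maximum at a vertex, and (e) by the identical contradiction between the cut indexed $j_1$ (already present in the master that produced $x^{j_2}$) and the strict positivity of $R(x^{j_2})$ evaluated at $w_{j_2}^*\in\mathcal{W}(x^{j_2})$. The only remark worth making is that both you and the paper implicitly assume the solver returns a \emph{vertex} optimizer of $\Pi$ rather than merely that one exists, but this is a shared (and standard) convention, not a divergence.
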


\begin{proof}
	\textit{Proof of Lemma \ref{lemma:1}(a):} Recall that $f^k$ is the optimal objective to the minimization master problem at iteration $k$. Since more and more constraints which are called feasibility cuts are appended to the minimization master problem \eqref{ineq:alg:5} during iterations, thus $f^k$ must be monotonously non-decreasing with respect to $k$.
	
	\textit{Proof of Lemma \ref{lemma:1}(b):} Recall the equivalent formulation of problem \eqref{compact} in \eqref{equivalent}, thus the master problem \eqref{ineq:alg:5} is always a relaxation to the minimization ARO-DDU problem \eqref{compact} for any $k\in\mathbb{Z}^+$. Thus $f^k\le f^*$ for any $k\in\mathbb{Z}^+$.
	
	\textit{Proof of Lemma \ref{lemma:1}(c):} Recall the definition of $R(x)$ in \eqref{ineq:alg:2}, $R(x^k)=0$ implies that $x^k$ satisfies constraint \eqref{equivalent:3}. Moreover, since $x^k$ is the solution to master problem \eqref{ineq:alg:5}, constrain \eqref{equivalent:2} (i.e., constraint \eqref{ineq:alg:5:2}) is met with $x^k$. Thus $x^k$ is a feasible solution to the minimization problem \eqref{equivalent}, indicating that $f^k\ge f^*$.

	\textit{Proof of Lemma \ref{lemma:1}(d):}
	Lemma \ref{lemma:1}(d) can be easily verified by noting that the optimal solution of bi-linear programming with polyhedron feasible set can be achieved at one of the vertices of the polytopes\cite{1976A}. Specific illustration is given as follows. For given $x^k$, since $(w_k^*,\pi_k^*)$ is the optimal solution to $R(x^k)$,
	\begin{align}
	(w_k^*,\pi_k^*)\in\arg\max_{w\in\mathcal{W}(x^k)}\left\{\max_{\pi\in \Pi}(b-Ax^k-Cw)^{\mathsf{T}}\pi\right\}
	\end{align}
	Then there must be $\pi_k^*\in\arg \max_{\pi\in \Pi}(b-Ax^k-Cw_k^*)^{\mathsf{T}}\pi$. 
	By noting that the unique optimal solution of linear programming must be found at one of its vertices, we have $\pi^*\in {\rm vert}(\Pi)$. 
	
	\textit{Proof of Lemma \ref{lemma:1}(e):}
	Suppose for the sake of contradiction that there exists $j_1,j_2\in [k]$ and $j_1\neq j_2$ such that ${\pi}^*_{j_1}={\pi}^*_{j_2}$. Without loss of generality we assume that $j_1<j_2$, and thus $j_1\le j_2-1$ since $j_1,j_2\in\mathbb{Z}^+$. Suppose ${\pi}_{j_2}^*$ is the optimal solution to $R({ x}^{j_2})$, there must be $R\left({ x}^{j_2}\right)>0$, implying that 
	\begin{align}
	\label{ineq:proof:3:1}
	\max_{{w}\in\mathcal{W}\left({ x}^{j_2}\right)}{{\pi}_{j_2}^*}^{\mathsf{T}}({ b}-A{x}^{j_2}-C{w})>0.
	\end{align}
	Since ${\pi}^*_{j_1}={\pi}^*_{j_2}$, we have
	\begin{align}
	\label{ineq:proof:3:2}
	\max_{{w}\in\mathcal{W}({ x}^{j_2})}{{\pi}_{j_1}^*}^{\mathsf{T}}({b}-A{ x}^{j_2}-C{w})>0.
	\end{align}
	Recall that ${x}^{j_2}$ is the optimal solution to the master problem with the following feasibility cuts
	\begin{align}
	\label{ineq:proof:3:3}
	0 \ge {{\pi}_j^*}^{\mathsf{T}}\left({b}-A{x}-C{ w}\right),\forall {w}\in\mathcal{W}({x}),j\in[j_2-1].
	\end{align}
	Since $j_1\le j_2-1$, there must be
	\begin{align}
	\label{ineq:proof:3:4}
	0 \ge {{\pi}_{j_1}^*}^{\mathsf{T}}\left({b}-A{ x}^{j_2}-C{w}\right),\forall { w}\in\mathcal{W}({x}^{j_2})
	\end{align}
	which contradicts with \eqref{ineq:proof:3:2}.
\end{proof}

Now we give the proof of Theorem \ref{theorem:0}.
\begin{proof}
	According to Lemma \ref{lemma:1}(a)-(b), $f^k$ is monotonously non-decreasing with respect to $k$ with an upper bound $f^*$. Combining Lemma \ref{lemma:1}(b) and (c), when the Algorithm.\ref{alg} terminates with $R(x^k)=0$, we have $f^k=f^*$, verifying the optimality of the solution.
	
	Next we illustrate that the Algorithm.\ref{alg} terminates within finite rounds of iterations. The number of vertexes of $\Pi$, denoted by $p$, is finite and no vertex of $\Pi$ can be appended twice to the master problem in Algorithm.\ref{alg} according to Lemma \ref{lemma:1}(d)-(e). Thus the Algorithm.\ref{alg} terminates within $\mathcal{O}(p)$ iterations.
\end{proof}

\ifCLASSOPTIONcaptionsoff
  \newpage
\fi
\bibliographystyle{IEEEtran}
\bibliography{IEEEabrv,mybib}

\end{document}